\let\doendproof\endproof
\renewcommand\endproof{~\hfill$\qed$\doendproof}
\newcommand{\rmO}{\mathrm{O}}
\newcommand{\relmiddle}[1]{\mathrel{}\middle#1\mathrel{}}
\newcommand{\Bz}{\mathrm{Bz}}
\newcommand{\SSS}{\mathrm{SS}}
\begin{document}
  \title{Computing Power Indices in Weighted Majority Games with Formal Power Series}
\titlerunning{Computing Power Indices in Weighted Majority Games}
% If the paper title is too long for the running head, you can set
% an abbreviated paper title here
%
\author{Naonori Kakimura\inst{1}\orcidID{0000-0002-3918-3479} \and
Yoshihiko Terai\inst{1}}
\authorrunning{N. Kakimura and Y. Terai}
% First names are abbreviated in the running head.
% If there are more than two authors, 'et al.' is used.
%
\institute{Keio University, \\Yokohama 223-8522, Japan.\\
\email{kakimura@math.keio.ac.jp}, 
\email{11ty24@keio.jp}
%\url{http://www.springer.com/gp/computer-science/lncs}
}
\maketitle              % typeset the header of the contribution
\begin{abstract}
In this paper, we propose fast pseudo-polynomial-time algorithms for computing power indices in weighted majority games.
We show that we can compute the Banzhaf index for all players in $\rmO(n+q\log (q))$ time, where $n$ is the number of players and $q$ is a given quota.
Moreover, we prove that the Shapley--Shubik index for all players can be computed in $\rmO(nq\log (q))$ time. 
Our algorithms are faster than existing algorithms when $q=2^{o(n)}$.
Our algorithms exploit efficient computation techniques for formal power series.

\keywords{Power index  \and Weighted majority game \and Formal power series \and Generating function}
\end{abstract}

\section{Introduction}\label{sec:intro}

In this paper, we study measuring the power of players in weighted majority games.
The weighted majority game~(also known as the weighted voting game) is a mathematical model of voting in which each player has a certain number of votes.
In the game, each player votes for or against a decision, and the decision is accepted if the sum of players' voting is greater or equal to a fixed quota.
%The game is a special case of a cooperative game, called ``simple games'' by von Neumann and Morgenstern~\cite{}.

In 1964, Shapley and Shubik~\cite{ShapleyShubik1954} proposed how to measure a voting power of each player,
called the \textit{Shapley--Shubik index}.
The Shapley--Shubik index is a specialization of the Shapley value~\cite{shapley:book1952} for cooperative games.
Another concept for measuring voting power was introduced by Banzhaf~\cite{Banzhaf1965}, called the \textit{Banzhaf index}.
These power indices are used to evaluate political systems such as the European Constitution and the IMF~\cite{ALGABA20071752,KOCZY2012152,kurz2016computing}. 
%There are many other measurements to assess the voting power of each agents such as ...

It is known that computing the Banzhaf and Shapley--Shubik indices are both known to be NP-hard~\cite{DengPapadimitriou94,PrasadKelly}.
This paper is thus concerned with calculating these power indices in pseudo polynomial time.
%A naive implementation of calculating the Banzhaf index is to enumerate all the coalitions.
%This requires $O(n 2^n)$ time for all players.
%Similarly, computing the Shapley-Shubik indices for $n$ players requires $O(n\cdot n!)$ time.
To devise fast algorithms, there are various approaches in the literature, that include using dynamic programming~\cite{Uno12}, generating functions~\cite{brams1976power,bilbao2000,MannShapley62}, binary decision diagrams~\cite{Bolus2011BDD}, Monte-Carlo methods~\cite{BachrachMRPRS10,RM-2651,UshiodaTM22}, and so on.
The current best time complexity~\cite{kurz2016computing,Uno12} is $O(nq)$ for computing the Banzhaf index of all $n$ players with a given quota $q$, and $O(n^2q)$ for the Shapley-Shubik index of all $n$ players.
%See Section~\ref{sec:relatedwork} for further related work.
It should be remarked that we here assume, as previous papers,  that the arithmetic operations of $n$-bit numbers can be performed in constant time. 
See also a remark in Section~\ref{sec:conclusion}.

The main contribution of this paper is to design efficient algorithms for calculating the Banzhaf and Shapley--Shubik indices for all players, respectively, using techniques in theory of formal power series.
Our proposed algorithms run in $\rmO(n+q\log (q))$ time for the Banzhaf index, and $\rmO(nq\log (q))$ time for the Shapley--Shubik index.
Our algorithms are faster than existing algorithms when $q=2^{o(n)}$.

In our algorithms, we first represent power indices with generating functions~(i.e., polynomials).
%Generating functions are polynomial functions with the information of the power indices?
In the previous work such as~\cite{bilbao2000,brams1976power,MannShapley62}, generating functions are computed by dynamic programming.
In this paper, we regard a generating function as a formal power series, which is an infinite sum of monomials, and exploit the fast Fourier transform~(FFT) for efficient computation.
We remark that similar technique was recently used to solve the subset sum problem~\cite{JinW19}.

\subsection{Related work}\label{sec:relatedwork}

We here describe only algorithmic aspects of power indices in weighted majority games.
See also a survey~\cite{Matsui}.
Practical applications of power indices may be found in, e.g., ~\cite{ALGABA20071752,KOCZY2012152,kurz2016computing}.
%For the complexity of computing the power indices, see a survey.
%See \cite{} for other aspects.

As mentioned before, it is NP-hard to calculate the Banzhaf and Shapley--Shubik indices.
In fact, Deng and Papadimitriou~\cite{DengPapadimitriou94} showed the problem of computing the Shapley--Shubik index is \#P-complete.
Prasad and Kelly~\cite{PrasadKelly} proved that computing the Banzhaf index is \#P-complete, and that a number of decision problems around the Banzhaf and the Shapley–Shubik indices belong to the class of NP-complete problems.
See also~\cite{MatsuiMatsui01} for other hardness results.
It is known that even approximating the Shapley--Shubik index within a constant
factor is intractable, unless $\text{P} = \text{NP}$~\cite{ElkindGGW07}.

%Bolus~\cite{Bolus2011BDD} proposed a BDD-based algorithm running in expected $O(nq)$ time.

Cantor~(as mentioned by~\cite{MannShapley62}) represented the Shapley–Shubik index with a generating function~(a polynomial), which was used for calculation~\cite{Lucas1983,MannShapley62}.
%Mann and Shapley~\cite{MannShapley62} proposed a dynamic programming algorithm based on the generating function.
See Brams and Affuso~\cite{brams1976power}
for the Banzhaf index.
%adapted the algorithm to calculate the Banzhaf index. 
Bilbao et al.~\cite{bilbao2000} proposed an algorithm based on generating functions, running in $\rmO(nq)$ time to compute the Banzhaf index of one fixed player.

%The Banzhaf index is counting the number of subsets $S$ that satisfy $w(S)\geq q$ and $w(S\setminus \{p\})<q$, which has similarly to counting the number of knapsack solutions.
Matsui and Matsui~\cite{Matsui} designed dynamic programming algorithms for the Banzhaf and Shapley--Shubik indices, respectively, which is a similar approach to the one for counting the number of knapsack solutions.
Uno~\cite{Uno12} later improved their algorithms, which run in $\rmO(nq)$ and $\rmO(n^2q)$ time, respectively, for calculating the Banzhaf and Shapley–Shubik indices of all players.
See also~\cite{kurz2016computing} for slight improvement with other parameters.
%proposed dynamic-programming algorithms for calculating the Banzhaf and Shapley--Shubik indices running in $\rmO(nq)$ and $\rmO(n^2q)$ time, respectively.
Klinz and Woeginger~\cite{KlinzWoeginger2005} 
proposed an $\rmO(n^2 2^{n/2})$-time algorithm based on the enumeration technique.
There is another line of research to calculate the power indices approximately, see, e.g.,~\cite{BachrachMRPRS10,RM-2651,UshiodaTM22}.
%Monte-Carlo

\section{Preliminaries}\label{sec:Pre}

\subsection{Power Index}\label{sec:powerindex}

Let us formally introduce our setting.
In a weighted majority game, there are $n$ players.
Let $N = \{1, 2,\ldots, n\}$ be a set of $n$ players.
Each player $p$ in $N$ has a non-negative integer weight $w_p$.
We are also given a non-negative integer $q$, which is called a \textit{quota}.
Then each player votes for or against a decision.
A \textit{coalition} is a set $S\subseteq N$ of players who vote for the decision.
A coalition $S$ is called \textit{winning} if $w(S) \geq q$, where we define $w(S)=\sum_{p\in S}w_p$, and \textit{losing} otherwise.
%Throughout this paper, we assume that there exists at least one winning coalition in the given game, that is, $w(N)\geq q$.

%The Banzhaf index $\Bz_p$ of player $p$ is defined as follows.
%\begin{defn}
    For a player $p$, the \textit{Banzhaf index} $\Bz_p$ is defined as 
\[
%   \begin{equation}\label{Banzhaf}
        \mathrm{Bz}_p := \frac{\#\set{S\subseteq N \mid w(S)\geq q, \ w(S\setminus \{p\})<q}}{2^n}.
%        \mathrm{Bz}_p := \frac{\#\set{S\subseteq N \mid i\ \mathrm{is\ swing\ w.r.t.}\ S}}{2^n}.
\]
%\end{equation}
%\end{defn}
Intuitively, for a winning coalition $S$ with $p\in S$,
the player $p$ is considered to have a power in $S$ if $S\setminus \{p\}$ is a losing coalition.
The Banzhaf index  $\Bz_p$ is equal to the probability that the player $p$ belongs to a coalition with having a power under the assumption that every coalition occurs uniformly at random.

We next define the Shapley--Shubik index. 
Consider the situation where, at first, a coalition $S$ is the empty set, and players join the coalition $S$ one by one in an order.
Then, at the beginning of this situation, $S$ is a losing coalition, and $S$ becomes a winning one at some point.
Let $p$ be the player such that $S$ has changed from a losing coalition to a winning one when $p$ has been added to $S$.
We can naturally think that the player $p$ has a power.

More specifically, let $\Pi$ be the set of permutations with length $n$.
We note that $|\Pi|=n!$.
A permutation $\pi\in \Pi$ indicates an ordering of players to join a coalition, that is, players make coalitions $S_1(\pi), S_2(\pi), \dots, S_n(\pi)$ in the order, where $S_i(\pi) =\{\pi_1, \pi_2,\dots, \pi_i\}$ for $i=1,2,\dots, n$. 
Then there exists a unique player $\pi_k$ such that 
$w(S_{k-1}(\pi)) < q$ and $w(S_k(\pi)) \geq q$.
%$\sum_{j=1}^{k-1}w_{\pi_j} < q$ and $\sum_{j=1}^{k}w_{\pi_j} \geq q$.
%In this case, player $\pi_k$ is called a \textit{pivot with respect to $\pi$}.
%The Shapley--Shubik index $\SSS_p$ of player $p$ is defined as follows.
The \textit{Shapley--Shubik index} $\SSS_p$ of player $p$ is defined as
%\begin{equation}
\[
        \SSS_p \coloneq \frac{\#\{ \pi\in \Pi\mid \exists k \text{ s.t. }w(S_{k-1}(\pi)) < q, \ w(S_{k}(\pi))\geq q, \ p=\pi_k\} }{n!}.
        %\SSS_p \coloneq \frac{\#\{ \pi\in \Pi\mid \sum_{j=1}^{k-1}w_{\pi_j} < q, \sum_{j=1}^{k}w_{\pi_j} \geq q, p=\pi_k\} }{n!}.
        %        \SSS_p \coloneq \frac{\#\{ \pi\in \Pi\mid i\ \mathrm{is\ pivot\ w.r.t.}\ \pi\} }{n!}
\]
The Shapley--Shubik index is equal to the probability that the player $p$ has a power under the assumption that every permutation occurs uniformly at random.
%\end{equation}
%\end{defn}

\subsection{Formal Power Series}\label{sec:FPS}

    Let $R$ be a commutative ring where the zero element and the identity are denoted by $0$ and $1$, respectively. 
    A \textit{formal power series $f$ over $R$} is an infinite sum of the form
    \[
     f = \sum_{i=0}^{\infty} a_ix^i = a_0 + a_1 x + a_2 x^2+\cdots,
    \]
    where $a_i\in R$ for every non-negative integer $i$.
    We call $a_i$ the \textit{$i$-th coefficient of $f$}.
    %A formal power series is indentical with an infinite sequence $(f_0, f_1, \dots)$.
    The $i$-th coefficient of $f$ is denoted by $[x^i]f$.
    %For a formal power series $f = \sum_{i=0}^{\infty}f_ix^i$, we denote $f_i = [x^i]f$ for $i\in \mathbb{Z}\cup\{\infty\}$.

    We denote the set of all formal power series over $R$ by $R[[x]]$, that is, $R[[x]] = \left\{\sum_{i=0}^{\infty} a_ix^i \relmiddle| a_i\in R\right\}$.
    Then $R[[x]]$ forms a ring.
%    The ring consisting of all formal power series over $R$ is called formal power series ring and is denoted by $R[[x]]$. 
Specifically, the ring $R[[x]]$ defines the following operations for two formal power series $f=\sum_{i=0}^{\infty}a_ix^i$ and $g=\sum_{i=0}^{\infty}b_ix^i$: 
  \begin{enumerate}
    \item (Sum) 
    \[
    f+ g = \sum_{i=0}^{\infty}(a_i + b_i)x^i.
    \]
    \item (Product) 
    \[
    f\cdot g 
    %= \left(\sum_{i=0}^{\infty}a_ix^i\right) \cdot \left(\sum_{i=0}^{\infty}b_ix^i\right) 
    = \sum_{k = 0}^{\infty}\left(\sum_{i+j = k}a_i \cdot b_j\right)x^k.
    \]
  \end{enumerate}
%\end{defn}

%For a formal power series $f$, if there exists some integer $d$ such that $[x^i]f=0$ for all $i >d$, then $f$ is a \textit{polynomial}.
%The integer $d$ is called the \textit{degree} of the polynomial $f$.

The \textit{(multiplicative) inverse} of a formal power series $f$ is a formal power series $g$ such that $f\cdot g = 1$.
The inverse of $f$, which is known to be unique~(if exists), is denoted by $1/f$ or $f^{-1}$.
For example, the inverse of $1-x$ is equal to $1+x+x^2+\cdots = \sum_{i=0}^{\infty}x^i$, as $(1-x)(1+x+x^2+\cdots)=1$.
%For example, the inverse of $1-ax$~($a\in R$) is $1+ax+(ax)^2+\cdots = \sum_{i=0}^{\infty}(ax)^i$.
Note that $f$ may not necessarily have the inverse.
An element having the inverse is called a \textit{unit}.

We observe that a formal power series $f\in R[[x]]$ is a unit if and only if $[x^0]f\in R$ has the inverse in a ring $R$.
In particular, when $R$ is a field, $f\in R[[x]]$ is a unit if and only if $[x^0]f\in R$ is not the zero element $0$.

%We next summarize properties on a formal power series ring on $\mathbb{Q}$ or $\mathbb{Q}[[y]]$.

%Assume that $R=\mathbb{Q}$ or $R=\mathbb{Q}[[y]]$.
A polynomial is a formal power series with only finitely many non-zero terms.
It is well-known that the product of two polynomials over the real field $\mathbb{R}$ can be computed efficiently by the fast Fourier transform~(FFT).
See e.g.,~\cite{kleinberg2006algorithm}.

%When $R$ is the real field $\mathbb{R}$, the following lemma is shown by the fast Fourier transform~(FFT).
\begin{lemma}\label{lem:product}
    Suppose that we are given two polynomials $f = \sum_{i=0}^{d}a_{i}x^{i}$ and $g = \sum_{j = 0}^{d}b_{j}x^{j}$ over the real field $\mathbb{R}$.
%    \begin{align*}
%        f = \sum_{i=0}^{d}a_{i}x^{i} \quad\text{and}\quad
%        g = \sum_{j = 0}^{d'}b_{j}x^{j},
%    \end{align*}
    Then their product $f\cdot g$ can be computed in $\rmO(d\log (d))$ time.
    That is, we can compute, in $\rmO(d\log (d))$ time, all the $k$-th coefficients 
    \begin{equation*}
        [x^k] (f\cdot g) = \sum_{i+j=k}a_{i}\cdot b_{j}
    \end{equation*}
    for non-negative integers $k$ with $k\leq 2d$.
\end{lemma}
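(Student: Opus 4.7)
The plan is to invoke the standard FFT-based polynomial multiplication algorithm from textbooks such as \cite{kleinberg2006algorithm}; the lemma is essentially just a restatement of that result, so the proof will be a brief sketch rather than a derivation from scratch.

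First, I would observe that the product $f \cdot g$ has degree at most $2d$, so it is completely determined by its values at any $N$ distinct evaluation points, where $N$ is chosen as the smallest power of two with $N \geq 2d+1$; in particular $N = \rmO(d)$. I would take as evaluation points the $N$-th complex roots of unity, since this choice is what enables the fast Fourier transform.

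Next, I would apply the FFT to evaluate both $f$ and $g$ at these $N$ points, each in $\rmO(N \log N) = \rmO(d \log d)$ time, multiply the resulting two vectors of values pointwise in $\rmO(N) = \rmO(d)$ time to obtain the values of $f \cdot g$ at the same points, and finally apply the inverse FFT in $\rmO(N \log N) = \rmO(d \log d)$ time to recover the coefficients of $f \cdot g$. Summing these costs yields the claimed $\rmO(d \log d)$ bound.

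The only subtlety---rather than a genuine obstacle---is that the FFT operates internally over $\mathbb{C}$, and so to guarantee constant-time arithmetic one relies on the word-RAM assumption already stated in the introduction (arithmetic on $n$-bit numbers in $\rmO(1)$ time); alternatively, one could use a number-theoretic transform over a suitable prime field to avoid floating-point issues entirely. Since the computational model is already fixed, no additional argument is required and the lemma follows.
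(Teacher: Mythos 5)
Your proposal is correct and matches the paper's treatment: the paper states this lemma without proof, simply citing the standard FFT-based multiplication algorithm (e.g., \cite{kleinberg2006algorithm}), and your sketch---evaluate $f$ and $g$ at $N = \rmO(d)$ roots of unity via FFT, multiply pointwise, and invert---is precisely that standard argument. Your remark on the word-RAM assumption for constant-time arithmetic is also consistent with the model the paper fixes in the introduction.
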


By the above lemma, we can efficiently compute the product of two formal power series in the following sense.
For two formal power series $f$ and $g$, if we are given the first $t$ coefficients $[x^i]f$ and $[x^i]g$ for $i=0,1,\dots, t-1$, then we can obtain the first $t$ coefficients $[x^i](f\cdot g)$ of $f\cdot g$ for $i=0,1,\dots, t-1$, in $\rmO(t \log(t))$ time.
Note that, since a formal power series is an infinite sum, we set a parameter $t$ indicating the number of coefficients we want to compute.

\section{Computing the Banzhaf index}\label{sec:Banzhaf}

In this section, we present a fast algorithm for computing the Banzhaf index in weighted majority games.
Let $N=\{1,2,\dots, n\}$ be a set of players where player $p$ has a non-negative integer weight $w_p$, and let $q$ be a quota.
We may assume that $1\leq w_p < q$ for any player $p$, and that there exists at least one winning coalition in the given game, that is, $w(N)\geq q$.

The main result of this section is the following.
%The above approach was applied by Brams and Affnso (1976) for computing the normalized Banzhaf index.

%\begin{thm}\label{mainbz}
\begin{theorem}\label{mainbz}
    For a weighted majority game with $n$ players and a quota $q$, we can compute the Banzhaf index for all players in $\rmO(n+q\log (q))$ time.
\end{theorem}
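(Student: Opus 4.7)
The plan is to reduce computing all Banzhaf counts to (i) producing a single truncated polynomial $F(x)\bmod x^q$ and (ii) extracting, for each weight value appearing among the players, one coefficient sum of an associated rational generating series. Set $F(x):=\prod_{p\in N}(1+x^{w_p})$, so that $[x^k]F$ counts coalitions of weight exactly $k$. For a player $p$ of weight $w=w_p$, write $F(x)=(1+x^w)G_w(x)$; the key observation is that $G_w(x)=\prod_{j\neq p}(1+x^{w_j})$ depends only on the value $w$, not on the identity of $p$, and that the number of swing coalitions for $p$ equals $s_w:=\sum_{k=q-w}^{q-1}[x^k]G_w(x)$, so $\Bz_p=s_{w_p}/2^n$. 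It therefore suffices to compute $s_w$ once per distinct weight value.

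The first task is to compute $F\bmod x^q$ in $\rmO(n+q\log q)$ time. I would count the multiplicity $c_v$ of each weight value $v\in\{1,\dots,q-1\}$ in $\rmO(n)$ time and form
\[
\log F(x)\equiv\sum_{v}c_v\sum_{k\geq 1}\frac{(-1)^{k+1}}{k}x^{kv}\pmod{x^q};
\]
the total number of nonzero terms is $\sum_{v:c_v>0}\lfloor(q-1)/v\rfloor\leq\sum_{v=1}^{q-1}q/v=\rmO(q\log q)$. Recovering $F\equiv\exp(\log F)\pmod{x^q}$ takes another $\rmO(q\log q)$ time by the standard Newton-iteration-based $\exp$ for formal power series, whose inner steps are polynomial products of total size $\rmO(q)$ handled by Lemma~\ref{lem:product}. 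A linear sweep then builds the prefix sums $P_k:=\sum_{i=0}^{k}[x^i]F$ in $\rmO(q)$ time.

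Next, I extract each $s_w$ from $F\bmod x^q$. Since $1+x^w$ is a unit in the formal power series ring, $G_w=F\cdot\sum_{j\geq 0}(-1)^j x^{jw}$, so $[x^k]G_w=\sum_{j=0}^{\lfloor k/w\rfloor}(-1)^j[x^{k-jw}]F$. Swapping the order of summation in the definition of $s_w$ rewrites it as an alternating sum of $\rmO(q/w)$ windows of the form $\sum_{i=a}^{b}[x^i]F$, each evaluated in $\rmO(1)$ time using $P_k$. Hence $s_w$ is produced in $\rmO(q/w)$ time for every distinct weight, and the total over all weights $w\in\{1,\dots,q-1\}$ is $\sum_{w=1}^{q-1}\rmO(q/w)=\rmO(q\log q)$. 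Adding the $\rmO(n)$ preprocessing and the $\rmO(q\log q)$ for $F\bmod x^q$ yields the claimed bound.

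The main obstacle is producing $F\bmod x^q$ in $\rmO(q\log q)$ rather than the $\rmO(nq\log q)$ one gets by multiplying the $n$ factors one at a time, or the $\rmO(q\log^2 q)$ of balanced divide-and-conquer on the factor list; this is precisely where the $\exp$/$\log$ machinery on formal power series becomes essential. One must also take care that the arithmetic setting admits inverses $1/k$ for $k<q$, for instance by working modulo a prime exceeding both $q$ and the maximum coefficient $2^n$ of $F$, which is consistent with the constant-time $n$-bit arithmetic convention adopted in the paper.
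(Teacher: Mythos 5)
Your proposal is correct and follows essentially the same route as the paper: compute $\hat f=\prod_p(1+x^{w_p})\bmod x^q$ via the Jin--Wu $\exp/\log$ technique in $\rmO(n+q\log q)$ time, then for each distinct weight $w$ extract the swing count by expanding $1/(1+x^w)$ as $\sum_j(-1)^jx^{jw}$ and evaluating $\rmO(q/w)$ terms against prefix sums, summing to $\rmO(q\log q)$ over all weights. Your prefix-sum array $P_k$ is exactly the paper's division by $(1-x)$ (its series $g=\hat f/(1-x)$), so even that step coincides.
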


It is known in~\cite{brams1976power} that the Banzhaf index $\mathrm{Bz}_p$ for a player $p$ can be represented with a polynomial over the real field $\mathbb{R}$.
For a player $p$, we define $f_p \in \mathbb{R}[[x]]$ as
\begin{equation*}
    f_p = \prod_{\ell\neq p}(1+x^{w_\ell}).
\end{equation*}
Then we observe that, for every non-negative integer $i$, the $i$-th coefficient $[x^i]f_p$ is equal to the number of subsets $S\subseteq N\setminus\{p\}$ with $w(S)=i$.
%\textbf{generating function?}

\begin{lemma}[Brams and Affuso~\cite{brams1976power}]\label{lem:BramsAffuso}
    For a player $p$, it holds that 
    \begin{equation*}%\label{Bzeq1}
        2^n\cdot \Bz_p = \sum_{j = q-w_p}^{q-1} [x^j]f_p.
    \end{equation*}
\end{lemma}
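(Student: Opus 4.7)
The plan is to unravel both sides and match them by a direct combinatorial identification. Starting from the definition, $2^n \cdot \Bz_p$ equals the number of subsets $S \subseteq N$ satisfying $w(S) \geq q$ and $w(S \setminus \{p\}) < q$. I would first observe that any such $S$ must contain $p$: if $p \notin S$, then $S \setminus \{p\} = S$, giving the contradictory inequalities $w(S) \geq q > w(S)$. So we may restrict to subsets $S$ with $p \in S$ and parameterize them by $T := S \setminus \{p\} \subseteq N \setminus \{p\}$.

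Next, I would rewrite the two conditions in terms of $T$. Since $w(S) = w(T) + w_p$, the condition $w(S) \geq q$ becomes $w(T) \geq q - w_p$, and $w(S \setminus \{p\}) < q$ becomes $w(T) < q$. Therefore
\begin{equation*}
2^n \cdot \Bz_p \;=\; \#\bigl\{\,T \subseteq N \setminus \{p\} \;\bigm|\; q - w_p \leq w(T) \leq q-1\,\bigr\}.
\end{equation*}

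Then I would turn to the right-hand side. Expanding the product
\begin{equation*}
f_p = \prod_{\ell \neq p} (1 + x^{w_\ell})
\end{equation*}
by choosing, for each $\ell \neq p$, either the summand $1$ or $x^{w_\ell}$, one sees that each monomial in the expansion corresponds bijectively to a subset $T \subseteq N \setminus \{p\}$, contributing $x^{w(T)}$. Collecting terms of equal degree gives $[x^j]f_p = \#\{T \subseteq N \setminus \{p\} : w(T) = j\}$. Summing this over $j = q - w_p, \ldots, q - 1$ yields exactly the count obtained above for $2^n \cdot \Bz_p$, which proves the lemma.

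There is essentially no hard step here; the only thing that requires a moment's care is verifying that the pivotal player $p$ is forced to belong to $S$ so that the substitution $T = S \setminus \{p\}$ is well-defined and weight-preserving in the way we need. Everything else is the standard generating-function bookkeeping for subset sums.
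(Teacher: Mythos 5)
Your proof is correct and matches the approach the paper itself takes: the paper states the key observation (that $[x^i]f_p$ counts the subsets $T\subseteq N\setminus\{p\}$ with $w(T)=i$) immediately before the lemma and otherwise defers to Brams and Affuso, and your argument---noting that $p\in S$ is forced and applying the bijection $S\mapsto T=S\setminus\{p\}$, which translates the conditions into $q-w_p\leq w(T)\leq q-1$---supplies exactly the cited bookkeeping. No gaps; nothing further is needed.
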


In Lemma~\ref{lem:BzFPS} below, we shall represent the Banzhaf index of all players with one formal power series over the real field $\mathbb{R}$.
    Define  $\hat{f} \in \mathbb{R}[[x]]$ as
    \begin{equation*}%\label{fff}
        \hat{f} = \prod_{p=1}^{n}(1+x^{w_p}).
    \end{equation*}
    Then, for each player $p$, we have
\[
    f_p = \frac{\hat{f}}{1+x^{w_p}}.
    \]

%    By the definition of $g$, together with~\eqref{Bzeq1}, we obtain the following.

\begin{lemma}\label{lem:BzFPS}
For each player $p$,  it holds that
    \begin{align}\label{Bzg2}
        2^n\cdot \Bz_p &= [x^{q-1}]\frac{f_p}{1-x} - [x^{q-w_p-1}]\frac{f_p}{1-x}\notag\\
         &= [x^{q-1}]\frac{\hat{f}}{(1-x)(1+x^{w_p})} - [x^{q-w_p-1}]\frac{\hat{f}}{(1-x)(1+x^{w_p})}.
    \end{align}
\end{lemma}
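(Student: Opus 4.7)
The plan is to derive the identity directly from Lemma~\ref{lem:BramsAffuso} by recognizing the ``partial-sum'' role of the formal power series $1/(1-x)$. By Lemma~\ref{lem:BramsAffuso}, it suffices to show that
\[
    \sum_{j=q-w_p}^{q-1}[x^j] f_p \;=\; [x^{q-1}]\frac{f_p}{1-x} - [x^{q-w_p-1}]\frac{f_p}{1-x},
\]
after which the second equality in \eqref{Bzg2} is obtained by substituting $f_p = \hat f/(1+x^{w_p})$, which in turn follows from the factorization $\hat f = (1+x^{w_p})\prod_{\ell\neq p}(1+x^{w_\ell}) = (1+x^{w_p})f_p$.

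The key step is the following elementary observation about formal power series over $\mathbb{R}$: since $1/(1-x)=\sum_{i\geq 0} x^i$, multiplying any $f = \sum_{i\geq 0} a_i x^i$ by $1/(1-x)$ yields the generating function of the partial sums, i.e.
\[
    [x^k]\frac{f}{1-x} = \sum_{j=0}^{k} a_j = \sum_{j=0}^{k}[x^j]f
\]
for every nonnegative integer $k$. I would verify this in one line from the product rule for formal power series recalled in Section~\ref{sec:FPS}, noting that $1-x$ is a unit in $\mathbb{R}[[x]]$ because its constant term is nonzero, so $f/(1-x)$ is well-defined.

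Applying this identity to $f_p$ with $k=q-1$ and with $k=q-w_p-1$ (both nonnegative under the assumption $1\le w_p < q$) and subtracting yields
\[
    [x^{q-1}]\frac{f_p}{1-x} - [x^{q-w_p-1}]\frac{f_p}{1-x}
    = \sum_{j=0}^{q-1}[x^j]f_p - \sum_{j=0}^{q-w_p-1}[x^j]f_p
    = \sum_{j=q-w_p}^{q-1}[x^j]f_p,
\]
which, combined with Lemma~\ref{lem:BramsAffuso}, establishes the first equality in \eqref{Bzg2}.

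The argument is essentially mechanical, so there is no serious obstacle; the only mildly non-trivial point is justifying that the division by $1-x$ makes sense in $\mathbb{R}[[x]]$ and correctly encodes partial sums. A brief remark on edge cases (e.g.\ the boundary $q-w_p-1 \ge 0$, and the interpretation when $w_p = q-1$) will make the proof complete.
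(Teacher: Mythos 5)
Your proposal is correct and follows essentially the same route as the paper's proof: both derive the identity from Lemma~\ref{lem:BramsAffuso} by expanding $1/(1-x)=\sum_{i\geq 0}x^i$ so that $[x^k]\bigl(f_p/(1-x)\bigr)$ becomes the partial sum $\sum_{j=0}^{k}[x^j]f_p$, and then telescoping the two coefficients, with the second equality in~\eqref{Bzg2} following from the factorization $\hat{f}=(1+x^{w_p})f_p$. Your additional remarks on the unit property of $1-x$ and the nonnegativity of $q-w_p-1$ under $1\leq w_p<q$ are fine but not needed beyond what the paper already assumes.
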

\begin{proof}
Since it holds that
%\[
$
\frac{1}{1-x} = \sum_{i=0}^{\infty}x^i, 
$
%\]
we have 
    \begin{equation*}%\label{Cumsum}
        \frac{f_p}{1-x} = f_p\cdot \left(\sum_{i=0}^{\infty}x^i\right) = \sum_{i=0}^{\infty}\left(\sum_{j=0}^{i}a_{pj}\right)x^i,
    \end{equation*}
    where we denote $a_{pj}=[x^j]f_p$ for every non-negative integer $j$.
Therefore, it holds that
\[
[x^{q-1}]\frac{f_p}{1-x} - [x^{q-w_p-1}]\frac{f_p}{1-x}
= \sum_{j=0}^{q-1}a_{pj} - \sum_{j=0}^{q-w_p-1}a_{pj}
= \sum_{j=q-w_p}^{q-1}a_{pj},
\]
which is equal to $2^n\cdot \Bz_p$ by Lemma~\ref{lem:BramsAffuso}.
Thus the lemma holds.
\end{proof}

We compute the Banzhaf index for all players using the formal power series in~\eqref{Bzg2}.
The proposed algorithm is described as follows.
%We here define 
%    \begin{equation}\label{Bzg}
%        g = \frac{\hat{f}}{1-x}.
%    \end{equation}

\begin{description}
  \item[Step 1.] Compute the $i$-th coefficients of $\hat{f}$ for all non-negative integers $i$ with $i < q$.% in $\rmO(n+q\log (q))$ time.
  \item[Step 2.] Compute the $i$-th coefficients of $g= \frac{\hat{f}}{1-x}$ for all non-negative integers $i$ with $i< q$.% in $\rmO(q)$ time.
  \item[Step 3.] For each player $p$ with different weights, compute the Banzhaf index $\mathrm{Bz}_p$ by~\eqref{Bzg2}.
  %by Lemma~\ref{}.%, in $\rmO(n+q\log (q))$ time.
\end{description}

We next evaluate the computational complexity of each step in the proposed algorithm.

Jin and Wu~\cite{JinW19} showed as below that a polynomial $f$ in the form of $f=\prod_{i=1}^{n} (1+x^{s_i})$ can be calculated efficiently with the aid of Lemma~\ref{lem:product}.
Thus Step~1 can be performed in $\rmO(n + q\log (q))$ time.

\begin{theorem}[Jin and Wu \cite{JinW19}]\label{subsetsum}
    For $n$ positive integers $s_1, s_2, \ldots, s_n$, define a polynomial $f$ over $\mathbb{R}$ to be $f=\prod_{i=1}^{n} (1+x^{s_i})$.
    Then, for a positive integer $t$, the first $t$ coefficients of $f$ can be computed in $\rmO(n+t\log (t))$ time.
\end{theorem}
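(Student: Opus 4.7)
The plan is to reduce to computing only the first $t$ coefficients of $f$, work throughout modulo $x^t$, and multiply the surviving factors by a carefully balanced scheme.

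As a first pass, I would scan the weights $s_1,\ldots,s_n$ in $\rmO(n)$ time and discard every factor $(1+x^{s_i})$ with $s_i \geq t$: such a factor satisfies $(1+x^{s_i}) \equiv 1 \pmod{x^t}$ and therefore does not affect $[x^j]f$ for $j<t$. Let $n' \leq n$ be the number of surviving factors; each now has $s_i \in \{1,\ldots,t-1\}$. To exploit repetition, I would bucket the survivors by value, letting $c_w$ count the indices with $s_i=w$; then
$$f \equiv \prod_{w=1}^{t-1}(1+x^w)^{c_w} \pmod{x^t}.$$
For each $w$ with $c_w>0$, the bucket polynomial $h_w := (1+x^w)^{c_w} \bmod x^t$ is supported only on multiples of $w$ up to $t-1$, with $[x^{kw}]h_w = \binom{c_w}{k}$, and can therefore be generated in $\rmO(t/w)$ time from the standard recurrence for binomial coefficients. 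Summed over the buckets, this stage costs $\sum_{w=1}^{t-1} \rmO(t/w) = \rmO(t\log t)$.

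Finally, I would combine the $h_w$'s into $f \bmod x^t$ via a product tree, where each internal multiplication is truncated to degree $<t$ and executed by Lemma~\ref{lem:product}. The main obstacle is the time analysis: a naive balanced tree spends $\rmO(t\log^2 t)$ rather than the desired $\rmO(t\log t)$, since each of the $\rmO(\log t)$ levels can cost $\rmO(t\log t)$ on its own. I would attack this by merging the bucket polynomials in order of increasing size (a Huffman-like order), using the fact that the total number of nonzero coefficients across all $h_w$ is $\rmO(t\log t)$; the FFT costs then form a geometrically dominated sum in which the final, full-size multiplication absorbs the earlier ones, giving $\rmO(t\log t)$ overall. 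Adding this to the $\rmO(n)$ preprocessing and the $\rmO(t\log t)$ bucket-setup time yields the claimed $\rmO(n + t\log t)$ bound.
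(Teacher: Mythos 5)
Your preprocessing and bucketing stages are sound (discarding $s_i\geq t$, and generating each $h_w=(1+x^w)^{c_w}\bmod x^t$ from the binomial recurrence in $\rmO(t/w)$ time is correct), but the final combination step has a genuine gap: the claimed $\rmO(t\log t)$ bound for the merging phase does not hold, because the cost of an FFT multiplication scales with the \emph{degree} of the operands, not with their number of nonzero coefficients. Each $h_w$ has support $\rmO(t/w)$ but degree as large as $t-1$, and there can be $\Theta(t)$ distinct weights. Concretely, if every weight $w\in(t/2,t)$ occurs, you get $\Theta(t)$ factors $h_w=1+c_wx^w$, each of degree $\Theta(t)$; every pairwise merge by Lemma~\ref{lem:product} then costs $\Theta(t\log t)$ regardless of the order in which you merge (all sizes are equal, so a Huffman-like order gives no leverage), for a total of $\Theta(t^2\log t)$. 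The ``geometrically dominated sum'' picture presupposes merge sizes growing geometrically from small leaves up to a single full-size root, and that is exactly what fails: degrees can already be maximal at the leaves, and truncation modulo $x^t$ caps sizes but does not reduce the number of expensive merges. (Your appeal to the total support being $\rmO(t\log t)$ is true but does not help an FFT-based tree; and sparse multiplication does not rescue the bound either, since supports multiply under products.) This barrier is not incidental --- getting below product-tree-type bounds such as $\rmO(t\log^2 t)$ is precisely the content of Jin and Wu's theorem.

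The known proof (and the route the paper takes for its two-variable generalization, Theorem~\ref{2_variable_exp-log}) avoids multiplication trees entirely via the $\exp$--$\log$ identity. Working modulo $x^t$, one writes $\log f=\sum_{w}c_w\log(1+x^w)$, where $\log(1+x^w)=\sum_{j\geq 1}(-1)^{j+1}x^{jw}/j$ has only $\rmO(t/w)$ nonzero terms below $x^t$; summing these sparse series over distinct weights costs $\sum_{w=1}^{t-1}\rmO(t/w)=\rmO(t\log t)$, and the multiplicities $c_w$ are absorbed by scalar multiplication of a sparse series rather than by binomial expansion. A single exponentiation by Newton iteration (Lemma~\ref{fps-exp}) then recovers $f\bmod x^t$ in $\rmO(t\log t)$ time, giving $\rmO(n+t\log t)$ overall. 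Your bucketing observation slots directly into this scheme: replacing your product tree by ``sum the logarithms, then exponentiate once'' repairs the proof.
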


For Step~2, since 
\[
g=\hat{f}\cdot \left(\frac{1}{1-x}\right) = \sum_{i=0}^{\infty}\left(\sum_{j=0}^{i}\hat{a}_{j}\right)x^i,
\]
where $\hat{a}_{j} =[x^j]\hat{f}$ for every non-negative integer $j$, we have $[x^i]g = \sum_{j=0}^{i}\hat{a}_{j}$ for $i=0,1,\dots, q-1$.
Hence we can compute $[x^0]g, \dots, [x^{q-1}]g$ sequentially in $\rmO(q)$ time.

%\begin{lemma}\label{BzCumSum}
% The coefficients $[x^1]g, \dots, [x^q]g$ can be computed in $\rmO(q)$ time.
%\end{lemma}
%\begin{proof}
%    It can be done, since we have $[x^j]g = \sum_{k\leq j}[x^k]f = [x^{j-1}]g + [x^j]f$ by~\eqref{Cumsum}. 
%\end{proof}

The following lemma evaluates the time complexity of Step~3.

\begin{lemma}\label{mainlemmabz}
    Suppose that we are given the first $q$ coefficients of $g$, that is, we are given $[x^0]g, [x^1]g,\dots, [x^{q-1}]g$. 
    For a player $p$ with weight $w_p$, the Banzhaf index $\Bz_p$ can be computed in $\rmO\left(\frac{q}{w_p}\right)$ time.
\end{lemma}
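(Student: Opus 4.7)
The plan is to use the identity from Lemma~\ref{lem:BzFPS} and avoid computing the whole formal power series $\hat{f}/((1-x)(1+x^{w_p}))$, by observing that dividing by $1+x^{w_p}$ induces a simple recurrence that couples coefficients only within a single residue class modulo $w_p$.

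First, I would define $h_p = g/(1+x^{w_p})$, which is a well-defined element of $\mathbb{R}[[x]]$ because $1+x^{w_p}$ has constant term $1$, hence is a unit. By Lemma~\ref{lem:BzFPS},
\[
2^n\cdot \mathrm{Bz}_p \;=\; [x^{q-1}]h_p - [x^{q-w_p-1}]h_p,
\]
so it suffices to compute these two coefficients.

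Writing $c_i := [x^i] h_p$ and expanding the defining identity $h_p\cdot (1+x^{w_p}) = g$ coefficientwise gives
\[
c_i + c_{i-w_p} \;=\; [x^i]g \qquad (i\ge 0),
\]
with the convention $c_j = 0$ for $j<0$. Equivalently, $c_i = [x^i]g - c_{i-w_p}$. This recurrence decouples the sequence $(c_i)$ into independent chains indexed by the residue class $i\bmod w_p$. Crucially, the two indices we need, $q-1$ and $q-w_p-1$, satisfy $q-1 \equiv q-w_p-1 \pmod{w_p}$, so they lie in the \emph{same} chain.

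I would therefore set $r := (q-1)\bmod w_p$ and compute $c_r, c_{r+w_p}, c_{r+2w_p}, \ldots, c_{q-1}$ in order using the recurrence. Each step reads one stored coefficient $[x^i]g$ and performs one subtraction, so the number of arithmetic operations is $\lceil (q-r)/w_p\rceil + 1 = O(q/w_p)$. The last two values in this chain are $c_{q-w_p-1}$ and $c_{q-1}$; their difference divided by $2^n$ yields $\mathrm{Bz}_p$. The only subtlety worth flagging is the boundary case $w_p\ge q$, where the chain has constant length; otherwise the bookkeeping is routine and the time bound follows immediately.
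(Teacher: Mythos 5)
Your proof is correct and essentially the same as the paper's: unrolling your recurrence $c_i = [x^i]g - c_{i-w_p}$ along the residue class of $q-1$ modulo $w_p$ yields exactly the paper's formula $[x^k]\frac{g}{1+x^{w_p}} = \sum_{j=0}^{\lfloor k/w_p\rfloor}(-1)^j\,[x^{k-jw_p}]g$, which the paper derives from the geometric-series expansion of $1/(1+x^{w_p})$. The only cosmetic difference is that you traverse the chain once and read off both needed coefficients while the paper evaluates each of the two coefficients as an $\rmO(q/w_p)$-term alternating sum, and your flagged boundary case $w_p\ge q$ is excluded by the paper's standing assumption $1\le w_p<q$.
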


\begin{proof}
We use~\eqref{Bzg2} to compute the Banzhaf index $\Bz_p$.
%    Since $\frac{1}{1+x^{w_p}} = \sum_{j=0}^{\infty}(-1)^jx^{jw_p}$, 
It holds that
\[
\frac{\hat{f}}{(1-x)(1+x^{w_p})}
=\frac{g}{1+x^{w_p}}
=
\left(\sum_{j=0}^{\infty}(-1)^jx^{jw_p}\right) \cdot  g.
\]
Then, for $k\leq q-1$, its $k$-th coefficient is equal to
\[
\sum_{i+jw_p=k} (-1)^j\cdot [x^i]g
=\sum_{j=0}^{\left\lfloor\frac{k}{w_p}\right\rfloor}(-1)^j\cdot [x^{k - jw_p}]g.
\]
    Since the right-hand side has at most $\frac{k}{w_p}+1=\rmO\left(\frac{q}{w_p}\right)$ terms, we can compute the $k$-th coefficient of $\frac{g}{1+x^{w_p}}$ in $\rmO \left(\frac{q}{w_p}\right)$ time, provided $[x^0]g, [x^1]g,\dots, [x^{q-1}]g$.
    Therefore, since the right-hand side of~\eqref{Bzg2} consists of two coefficients of $\frac{g}{1+x^{w_p}}$, $\Bz_p$ can be calculated in $\rmO\left(\frac{q}{w_p}\right)$ time.
%    \begin{align*}
%        [x^{q-1}]\frac{g}{1+x^{w_p}} &\equiv \sum_{j=0}^{\left\lfloor\frac{q - 1}{w_p}\right\rfloor}(-1)^j\cdot [x^{q - jw_p}]g \pmod{(x^q)} \\
 %       [x^{q - w_p - 1}]\frac{g}{1+x^{w_p}} &\equiv \sum_{j=0}^{\left\lfloor\frac{q - w_p - 1}{w_p}\right\rfloor}(-1)^j\cdot [x^{q - jw_p}]g \pmod{(x^q)}
 %   \end{align*}
%    Since both of the right-hand sides are the sum of $\rmO\left(\frac{q}{w_p}\right)$ terms, we can calculate each in $\rmO\left(\frac{q}{w_p}\right)$ time.
\end{proof}

We are now ready to prove Theorem \ref{mainbz}.

\begin{proof}[Proof of Theorem~\ref{mainbz}]
    As discussed above, Step~1 in the proposed algorithm takes $\rmO(n + q\log (q))$ time by Theorem~\ref{subsetsum}, and Step~2 takes $\rmO(q)$ time.
    It follows from Lemma~\ref{mainlemmabz} that $\Bz_p$ of a player $p$ can be computed in $\rmO\left(\frac{q}{w_p}\right)$ time.
    Since two players $p, p'$ with $w_p = w_{p'}$ satisfy $\Bz_p = \Bz_{p'}$, we can compute the Banzhaf index of all players by computing those of players with different weights.
    Since weights are integers between $1$ and $q-1$, the complexity for Step~3 can be bounded by   
    \begin{equation*}
        \rmO\left(\sum_{w=1}^{q-1} \frac{q}{w}\right) = \rmO(q\log (q)).
    \end{equation*}
    Therefore, the total complexity of the proposed algorithm is $\rmO(n + q\log (q))$.
\end{proof}

\section{Computing the Shapley--Shubik index}\label{sec:SS}

In this section, we show that the Shapley--Shubik index can be computed efficiently.

\begin{theorem}\label{mainss}
    For a weighted majority game with $n$ players and a quota $q$, we can compute the Shapley--Shubik index of all players in $\rmO(n+\tilde{n} q\log (q))$ time, where $\tilde{n}=\min\{n, q\}$.
\end{theorem}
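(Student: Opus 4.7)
The plan is to lift the Banzhaf argument of Section~\ref{sec:Banzhaf} by introducing an auxiliary variable $y$ that records the cardinality of a coalition alongside its weight. Define
\[
\hat{f}(x,y) = \prod_{p=1}^{n}(1+y\,x^{w_p}), \quad \hat{f}_s(x):=[y^s]\hat{f}(x,y),\quad (f_p)_s(x):=[y^s]\frac{\hat{f}(x,y)}{1+y\,x^{w_p}},
\]
so that $\hat{f}_s(x)$ and $(f_p)_s(x)$ enumerate by weight the subsets of size $s$ in $N$ and in $N\setminus\{p\}$ respectively. Classifying each permutation $\pi\in\Pi$ by its pivotal prefix $S_{k-1}(\pi)$, and noting that the orderings inside and outside this prefix contribute a factor $|S|!\,(n-1-|S|)!$, I would first establish the bivariate analogue of Lemma~\ref{lem:BramsAffuso}:
\[
n!\cdot\SSS_p = \sum_{s=0}^{n-1} s!\,(n-1-s)!\sum_{j=q-w_p}^{q-1}[x^j](f_p)_s(x).
\]

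Next I would prove the Shapley--Shubik analogue of Lemma~\ref{lem:BzFPS}. Set $G_s(x):=\hat{f}_s(x)/(1-x)$ and $G_s^{(p)}(x):=(f_p)_s(x)/(1-x)$. The same telescoping as in the Banzhaf proof rewrites the inner sum as $[x^{q-1}]G_s^{(p)}(x) - [x^{q-w_p-1}]G_s^{(p)}(x)$, while expanding $1/(1+y\,x^{w_p})=\sum_{j\geq 0}(-1)^j y^j x^{j w_p}$ and extracting the $y^s$-coefficient gives
\[
G_s^{(p)}(x) = \sum_{j=0}^{s}(-1)^j\,x^{j w_p}\,G_{s-j}(x).
\]

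The algorithm then mirrors Section~\ref{sec:Banzhaf} over the range $0\leq s\leq\tilde{n}-1$: this truncation is legitimate because $w_p\geq 1$ forces $w(S)\geq |S|$, so every size-$s$ subset with $s\geq q$ has weight $\geq q$ and contributes nothing to $\hat{f}_s\bmod x^q$. Given all $\hat{f}_s\bmod x^q$, prefix sums produce the $G_s\bmod x^q$ in $\rmO(\tilde{n} q)$ time. For each distinct weight $w$, evaluating $[x^{q-1}]G_s^{(p)}$ and $[x^{q-w-1}]G_s^{(p)}$ via the displayed formula uses $\rmO(\min(s+1,\,q/w))$ additions per $s$; summing over $s\leq\tilde{n}-1$ yields per-weight cost $\rmO(\tilde{n}\,q/w)$ (the regime $q/w\geq\tilde{n}$ really costs $\rmO(\tilde{n}^2)$, but this is absorbed by the same bound), and summing over distinct weights $w\in\{1,\dots,q-1\}$ gives the harmonic total $\rmO(\tilde{n} q\log q)$, exactly as in the proof of Theorem~\ref{mainbz}.

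The main obstacle is computing all $\hat{f}_s(x)\bmod x^q$ for $s<\tilde{n}$ within the budget $\rmO(n+\tilde{n} q\log q)$---a bivariate analogue of Theorem~\ref{subsetsum}. I would use the log--exp identity
\[
\hat{f}(x,y) \;\equiv\; \exp\!\Big(\sum_{j=1}^{\tilde{n}-1}\tfrac{(-1)^{j+1}}{j}\,y^j\,p_j(x)\Big)\pmod{y^{\tilde{n}},\,x^q},\qquad p_j(x):=\sum_{p=1}^{n}x^{j w_p}.
\]
Bucketing the weights by value, all $p_j\bmod x^q$ for $j<\tilde{n}$ can be tabulated in $\rmO(n+q\log q)$ time since the total number of deposits is $\sum_{w}\min(\tilde{n},\lfloor(q-1)/w\rfloor)=\rmO(q\log q)$. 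The bivariate exponential is then obtained by Newton iteration in $y$: starting from $F^{(0)}=1$, one doubles the $y$-precision via $F^{(k+1)}=F^{(k)}\bigl(1+L-\log F^{(k)}\bigr)$ modulo $(y^{2^{k+1}},x^q)$, each round costing $\rmO(2^{k+1}q\log(2^{k+1}q))$ using 2D-FFT multiplication and inversion. The geometric series over $k\leq\lceil\log\tilde{n}\rceil$ is dominated by the final iteration, giving $\rmO(\tilde{n} q\log(\tilde{n} q))=\rmO(\tilde{n} q\log q)$ since $\tilde{n}\leq q$. Carrying out this bivariate Newton iteration cleanly within the stated budget is the delicate part of the proof.
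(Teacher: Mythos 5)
Your proposal is correct and follows the paper's architecture almost step for step: the bivariate Cantor/Brams--Affuso representation, truncation of the $y$-degree to $\tilde{n}$ justified by $w(S)\geq |S|$, prefix sums for division by $1-x$, the expansion $1/(1+yx^{w_p})=\sum_{j\geq 0}(-1)^jy^jx^{jw_p}$ with per-weight cost $\rmO(\tilde{n}q/w)$ and the harmonic-sum bound, and the log--exp computation of $\hat{f}\bmod (y^{\tilde{n}},x^q)$ with the same $\rmO(q\log q)$ accounting for the logarithm (cf.\ Lemma~\ref{lem:SS1}, Lemma~\ref{mainlemmass}, and the proof of Theorem~\ref{2_variable_exp-log}). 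The one genuine divergence is in the exponential. The paper's Lemma~\ref{lem:twovarFPS} runs Newton iteration in $x$ over the coefficient ring $R=\mathbb{R}[[y]]/(y^{\tilde{n}})$, which forces a discussion of nilpotency and a separate univariate Brent computation (Lemma~\ref{fps-exp}) for the $0$-th coefficients $[x^0]\log f=\log(a_0)$ and $[x^0]\exp(f)=\exp(a_0)$. You instead double the $y$-precision, where the base case is trivial because $\hat{f}\equiv 1\pmod{y}$: all constants of integration vanish and no nilpotency machinery is needed. Both give the geometric-sum cost $\rmO(\tilde{n}q\log(\tilde{n}q))=\rmO(\tilde{n}q\log q)$ since $\tilde{n}\leq q$; your variant is arguably cleaner for this specific $f$, while the paper's yields the more general, reusable statement of Lemma~\ref{lem:twovarFPS} for arbitrary series over $R$. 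Your flagged ``delicate part'' is real but contains no missing idea---you would still need the nested inverse inside each $\log$ step and 2D products via Kronecker substitution as in Lemma~\ref{lem:TwoVarProduct}, exactly the content of Section~\ref{sec:TwoVarFPS}. One further point in your favor: your inner summation range $\sum_{j=q-w_p}^{q-1}$ is the correct one; the range $\sum_{j=q-w_p-1}^{q-1}$ printed in the paper's Lemma~\ref{Cantor} and in~\eqref{SSeq1} is an off-by-one slip, as the paper's own telescoping in the proof of Lemma~\ref{lem:SS1} produces your range.
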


To prove the theorem, we introduce a polynomial with two variables.
For a polynomial $f=\sum_{k=0}^{d} \sum_{j=0}^{d'} b_{jk} x^j y^k$ with two variables,
we denote $[y^kx^j]f_p=b_{jk}$ for non-negative integers $k, j$.
    We also denote $[x^j]f_p=\sum_{k=0}^{d} b_{jk} y^k$ and $[y^k]f_p=\sum_{j=0}^{d'} b_{jk} x^j$.

%we denote by $[y^kx^j]f_p$ the coefficient of the monomial with $y^kx^j$ in $f_p$.
%That is, if we can write $f_p$ as $f_p =\sum_{k=0}^{\infty} \sum_{j=0}^{\infty} b_{jk} x^j y^k$ by polynomial expansion,
%    then $[y^kx^j]f_p=b_{jk}$.

\subsection{Algorithm}\label{sec:SSalg}

%Similarly to the Banzhaf index, 
As shown by Cantor~(cf.~\cite{Lucas1983,MannShapley62}), the Shapley--Shubik index can be represented with a polynomial with two variables.
For a player $p$, we define a polynomial $f_p$ in two variables $x, y$ over the real field $\mathbb{R}$ as
    \begin{equation*}
        f_p = \prod_{\ell\neq p}(1+yx^{w_\ell}).
    \end{equation*}
    Then $[y^kx^j]f_p$  is equal to the number of subsets $S\subseteq N\setminus\{p\}$ such that $|S|=k$ and $w(S)=j$.

\begin{lemma}[Cantor~(cf.~\cite{Lucas1983,MannShapley62})]\label{Cantor}
    For a player $p$, it holds that
    \[    %\begin{equation}\label{SSeq1}
        n!\cdot \SSS_p = \sum_{k=1}^{n-1}\left(k!\cdot (n-1-k)!\sum_{j=q-w_p-1}^{q-1} [y^kx^j]f_p\right).
    %\end{equation}
    \]
%where $[y^kx^j]f_p$ denotes the coefficient of the monomial having $y^kx^j$ in $f_p$.
\end{lemma}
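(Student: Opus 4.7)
The plan is to prove Cantor's formula by unpacking the definition of $\SSS_p$ and conditioning, for each permutation $\pi$ in which $p$ is pivotal, on the position of $p$ in $\pi$ together with the set of players preceding $p$. The bivariate generating function $f_p$ will then convert the resulting double count directly into a coefficient identity.

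First I would record the combinatorial meaning of the coefficients of $f_p$. Expanding $f_p = \prod_{\ell \neq p}(1 + y x^{w_\ell})$ and distributing, each subset $S \subseteq N \setminus \{p\}$ contributes exactly one term $y^{|S|} x^{w(S)}$ (take $y x^{w_\ell}$ from the factors with $\ell \in S$ and $1$ from the rest). Hence $[y^k x^j] f_p$ is the number of subsets $S \subseteq N \setminus \{p\}$ with $|S| = k$ and $w(S) = j$, as already stated before the lemma.

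Next I would count pivotal permutations. If $\pi$ makes $p$ pivotal, let $k \in \{1,\dots,n\}$ be the position of $p$ in $\pi$ and set $S := S_{k-1}(\pi) \subseteq N \setminus \{p\}$, so that $|S| = k-1$. The pivotal condition $w(S) < q \leq w(S)+w_p$ is equivalent to $w(S) \in \{q - w_p,\, q-w_p+1, \dots, q-1\}$. Conversely, given any such $S$, there are $(k-1)!$ orderings of the prefix and $(n-k)!$ orderings of the suffix, producing $(k-1)!\,(n-k)!$ permutations with $p$ pivotal at position $k$ and prefix set $S$. Summing over $k$ and $S$, and using the coefficient interpretation of $f_p$, yields
\[
n! \cdot \SSS_p \;=\; \sum_{k=1}^{n} (k-1)! \,(n-k)! \sum_{j=q-w_p}^{q-1} [y^{k-1} x^j]\,f_p.
\]
The $k=1$ term vanishes, since it would require $w(\emptyset)=0$ to lie in $\{q-w_p,\dots,q-1\}$, which is impossible under the standing assumption $w_p < q$; reindexing by $k' := k - 1 \in \{1,\dots,n-1\}$ then gives the form stated in the lemma.

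The main obstacle is purely bookkeeping: keeping the position of $p$ distinct from the number of its predecessors when reindexing, and verifying that the boundary contribution $k'=0$ is correctly eliminated using the hypothesis $1 \leq w_p < q$ (so that the inner sum is well-defined and none of the extreme cases sneak in an extra nonzero coefficient).
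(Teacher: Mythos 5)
Your counting argument is correct and complete: the coefficient interpretation of $f_p$ (each $S\subseteq N\setminus\{p\}$ contributes $y^{|S|}x^{w(S)}$), the decomposition of the pivotal permutations by the position $k$ of $p$ together with the prefix set $S$ with $|S|=k-1$, the factor $(k-1)!\,(n-k)!$ for the orderings of prefix and suffix, and the elimination of the $k=1$ (i.e., $k'=0$) boundary term via the standing assumption $1\leq w_p<q$ are all as they should be. Note that the paper offers no proof of this lemma at all — it is quoted as Cantor's classical formula — so yours is a self-contained supply of the standard direct proof rather than a variant of anything in the text.

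One point deserves explicit emphasis: what your argument establishes is
\[
n!\cdot\SSS_p=\sum_{k=1}^{n-1}k!\,(n-1-k)!\sum_{j=q-w_p}^{q-1}[y^kx^j]f_p,
\]
with inner lower limit $q-w_p$, whereas the lemma as printed has $j=q-w_p-1$. Your version is the correct one: a set $S$ with $w(S)=q-w_p-1$ satisfies $w(S)+w_p=q-1<q$, so $p$ is not pivotal there, and the printed lower limit would overcount. The paper is in fact internally inconsistent on this point — the Banzhaf analogue (Lemma~\ref{lem:BramsAffuso}) uses lower limit $q-w_p$, and the proof of Lemma~\ref{lem:SS1} derives $\sum_{j=q-w_p}^{q-1}[y^kx^j]f_p$ and then asserts equality with~\eqref{SSeq1} — so the $q-w_p-1$ in the statement and in~\eqref{SSeq1} is an off-by-one typo, presumably copied from the cumulative-coefficient index $[x^{q-w_p-1}]$ appearing legitimately in~\eqref{SSg2}, where the subtraction of partial sums $\sum_{j=0}^{q-1}-\sum_{j=0}^{q-w_p-1}$ produces exactly your range. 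Since you wrote the proof blind, you should state explicitly that you are correcting the printed bound rather than leave the mismatch with the statement unremarked; as it stands, your final display does not literally match the lemma you were asked to prove, even though it matches the lemma the authors intended and actually use.
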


Since $w_p\geq 1$ for every player $p$, we see that, if $[y^kx^j]f_p\neq 0$, then $k\leq j$ holds.
Hence it suffices to consider $[y^kx^j]f_p$ for $k\leq j$ in the right-hand side, implying that     \begin{equation}\label{SSeq1}
        n!\cdot \SSS_p = \sum_{k=1}^{\tilde{n}-1}\left(k!\cdot (n-1-k)!\sum_{j=q-w_p-1}^{q-1} [y^kx^j]f_p\right),
    \end{equation}
%    \]
%we can compute $\SSS_p$ from $[y^kx^j]f_p$'s for $k < \min\{n, q\}$ and $j< q$.
where $\tilde{n}=\min\{n, q\}$.

Similarly to Lemma~\ref{lem:BzFPS} in Section~\ref{sec:Banzhaf}, 
the Shapley--Shubik index $\SSS_p$ for a player $p$ can be represented as follows.
Define 
    \[
        \hat{f} = \prod_{p=1}^n (1+yx^{w_p})
        \quad\text{and}\quad
        g = \frac{\hat{f}}{1-x}.
    \]
%    and $g = \frac{\hat{f}}{1-x}$.
    %, both of which are elements in $R[[x]]$.
%Then $\hat{f}$ is in $R[[x]]$, as the degree of $\hat{f}$ with respect to $y$ is at most $n$.

\begin{lemma}\label{lem:SS1}
   For each player $p$, it holds that
   \begin{equation}\label{SSg2}
        n!\cdot \SSS_p = \sum_{k=1}^{\tilde{n}-1}k! \cdot (n-1-k)!\left([y^kx^{q-1}]\frac{g}{1+yx^{w_p}}-[y^kx^{q-w_p-1}]\frac{g}{1+yx^{w_p}}\right).
 \end{equation}
\end{lemma}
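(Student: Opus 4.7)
The plan is to adapt the proof of Lemma~\ref{lem:BzFPS} to the bivariate setting, using equation~\eqref{SSeq1} as the starting point instead of Lemma~\ref{lem:BramsAffuso}. The main observation is that, since $\hat{f} = (1+yx^{w_p})\cdot f_p$, we have
\[
\frac{g}{1+yx^{w_p}} = \frac{\hat{f}}{(1-x)(1+yx^{w_p})} = \frac{f_p}{1-x},
\]
so the two coefficients on the right-hand side of~\eqref{SSg2} are coefficients of the bivariate formal power series $f_p/(1-x)$.

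Next I would extract coefficients of $f_p/(1-x)$. Writing $f_p = \sum_{k,j} b_{pkj}\, y^k x^j$ with $b_{pkj} = [y^k x^j]f_p$, the identity $\tfrac{1}{1-x} = \sum_{i\geq 0} x^i$ in $\mathbb{R}[y][[x]]$ gives
\[
\frac{f_p}{1-x} = \sum_{k\geq 0} y^k \sum_{i\geq 0}\left(\sum_{j=0}^{i} b_{pkj}\right) x^i,
\]
so for any non-negative integers $k$ and $m$,
\[
[y^k x^m]\frac{f_p}{1-x} = \sum_{j=0}^{m} b_{pkj}.
\]
Subtracting the cases $m = q-1$ and $m = q-w_p-1$ telescopes the sum to
\[
[y^k x^{q-1}]\frac{g}{1+yx^{w_p}} - [y^k x^{q-w_p-1}]\frac{g}{1+yx^{w_p}} = \sum_{j=q-w_p}^{q-1}[y^k x^j] f_p.
\]

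Finally, I would multiply both sides by $k!\cdot(n-1-k)!$ and sum over $k=1,\dots,\tilde{n}-1$, obtaining exactly the right-hand side of~\eqref{SSeq1}, which equals $n!\cdot \SSS_p$. This proves~\eqref{SSg2}.

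There is no real obstacle here; the proof is essentially a bivariate replay of Lemma~\ref{lem:BzFPS}, with the only bookkeeping issue being the restriction of the outer index to $k \leq \tilde{n}-1$, which was already justified before~\eqref{SSeq1} by the fact that $[y^k x^j]f_p = 0$ whenever $k > j$ (because every $w_\ell \geq 1$), combined with $j \leq q-1$.
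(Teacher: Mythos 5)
Your proof is correct and is essentially identical to the paper's own argument: both use $\hat{f}=(1+yx^{w_p})f_p$ to rewrite $g/(1+yx^{w_p})$ as $f_p/(1-x)$, expand against $\sum_{i\geq 0}x^i$, telescope the two coefficients to $\sum_{j=q-w_p}^{q-1}[y^kx^j]f_p$, and invoke~\eqref{SSeq1}. Note that your telescoped range $j=q-w_p,\dots,q-1$ does not literally match the printed lower limit $q-w_p-1$ in~\eqref{SSeq1}, but this is an off-by-one typo in the paper's statement of Cantor's lemma (the pivotality condition $q-w_p\leq w(S)\leq q-1$ gives your range), and the paper's own proof has the identical mismatch.
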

%    The proof may be found in Appendix~\ref{sec:Appendix}.
\begin{proof}
It holds that
\[
\frac{g}{1+yx^{w_p}} = 
\frac{f_p}{1-x}
=
f_p\cdot \left(\sum_{i=0}^{\infty}x^i\right)
=
\sum_{i=0}^{\infty} \left(\sum_{j=0}^i a_{pj}\right) x^i,
\]
where we denote $a_{pj}=[x^j]f_p$ for every non-negative integer $j$.
We note that $a_{pj}$ is a polynomial in $y$.
The coefficient of $y^kx^i$ in $f_p/(1-x)$ is equal to 
\[
[y^k] \left(\sum_{j=0}^i a_{pj}\right)
= \sum_{j=0}^i [y^k] a_{pj}
= \sum_{j=0}^i [y^kx^j] f_{p}.
\]
Hence we have 
\begin{align*}
[y^kx^{q-1}]\frac{g}{1+yx^{w_p}}-[y^kx^{q-w_p-1}]\frac{g}{1+yx^{w_p}}
&=
\sum_{j=0}^{q-1} [y^k x^j] f_{p}
-
\sum_{j=0}^{q-w_p-1} [y^k x^j] f_{p}\\
&=
\sum_{j=q-w_p}^{q-1} [y^k x^j] f_{p},
\end{align*}
implying that the right-hand side of~\eqref{SSg2} is equal to $n!\cdot \SSS_p$ by~\eqref{SSeq1}. %Lemma~\ref{Cantor}.
%\[
%\sum_{k=1}^{n-1}\left(\sum_{j=0}^i a_{pj} x^i\right)\cdot k! \cdot (n-1-k)!,
%\]
%\qed
\end{proof}

Our proposed algorithm computes the right-hand side of~\eqref{SSg2} for a player $p$.
By~\eqref{SSg2}, it suffices to compute $[y^kx^j]\frac{g}{1+yx^{w_p}}$ for all $k< \tilde{n}$ and $j<q$.
To this end, we regard a two-variable rational function as a formal power series~(w.r.t.~$x$) over the formal power series ring~(w.r.t.~$y$).
%We note that, since we are interested in the coefficients of $y^k$ for $k\leq \tilde{n}$ by~\eqref{SSg2}, we  
Specifically, let $R=\mathbb{R}[[y]]/(y^{\tilde{n}})$, which is the quotient ring of $\mathbb{R}[[y]]$ by the ideal generated by $y^{\tilde{n}}$.
That is, an element in $R$ can be written as $\sum_{i=0}^{\tilde{n}-1}b_i y^i$ for $b_i \in \mathbb{R}$.
Note that $R$ is isomorphic to the quotient ring $\mathbb{R}[y]/(y^{\tilde{n}})$ of polynomials.
Consider the formal power series ring $R[[x]]$.
Then, a formal power series $f$ in $R[[x]]$ can be written as 
\[
f = \sum_{i=0}^{\infty}\left(\sum_{j=0}^{\tilde{n}-1}a_{ij}y^j\right)x^i,
\]
where $a_{ij}\in\mathbb{R}$.
The \textit{$i$-th coefficient $[x^i]f$} is equal to $\sum_{j=0}^{\tilde{n}-1}a_{ij}y^j$.
%, which is an element in $R=\mathbb{R}[[y]]/(y^{\tilde{n}})$.
%The \textit{two-variable formal power series ring} $R[[x, y]]$ is defined as $\left(R[[y]]\right)[[x]]$.
%On the other hand, $f$ can be written as $f = \sum_{i=0}^{\infty}\sum_{j=0}^{\tilde{n}-1}a_{ij}x^iy^j$.
We also denote $[y^jx^i] f = a_{ij}$ for every non-negative integers $i$ and $j < \tilde{n}$.

The proposed algorithm is presented as below.
%presented below, is similar to the one for the Banzhaf index, but uses~\eqref{SSg2}, instead of~\eqref{Bzg2}.
%Let $R=\mathbb{R}[[y]]/(y^{\tilde{n}})$.

\begin{description}
  \item[Step 1.] Compute the $i$-th coefficients of $\hat{f}\pmod{y^{\tilde{n}}}$ for all non-negative integers $i$ with $i < q$.
  %$[x^0]\hat{f}, \dots, [x^q]\hat{f}$. 
  %the $i$-th coefficients of $\hat{f}$ for all $i$ with $i\leq q$.% in $\rmO(n+q\log (q))$ time.
  \item[Step 2.] Compute the $i$-th coefficients of $g\pmod{y^{\tilde{n}}}$ for all non-negative integers $i$ with $i < q$.
%  $[x^0]g, \dots, [x^q]g$,  where $g = \frac{f}{1-x}$.
  %the $i$-th coefficient of $g = \frac{f}{1-x}$ for all $i$ with $i\leq q$.% in $\rmO(q)$ time.
  \item[Step 3.] For each player $p$ with different weights, compute the Shapley--Shubik index $\mathrm{SS}_p$ by~\eqref{SSg2}.
  %, in $\rmO(n+q\log (q))$ time.
\end{description}

To evaluate the time complexity of the proposed algorithm, we first show a variant of Theorem~\ref{subsetsum} as follows, where the proof is deferred to Section~\ref{sec:TwoVarFPS}.
The theorem implies that Step~1 can be executed in $\rmO (n+\tilde{n}q \log (\tilde{n}q))$ time.

\begin{theorem}\label{2_variable_exp-log}
    For $n$ positive integers $s_1, s_2, \ldots, s_n$, define a polynomial $f=\prod_{i=1}^{n} (1+yx^{s_i})$ over the real field $\mathbb{R}$.
    Then, for two positive integers $m, t$, we can compute all the coefficients $[y^kx^j]f$ for $j<t$ and $k<m$ in $\rmO(n+mt\log (mt))$ time.
\end{theorem}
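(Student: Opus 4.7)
The plan is to mimic the Jin--Wu log/exp trick used in Theorem~\ref{subsetsum}, lifted to the bivariate setting. Specifically, I would write $f = \exp(\log f)$ where
\[
L := \log f = \sum_{i=1}^{n} \log(1 + y x^{s_i}) = \sum_{i=1}^{n}\sum_{k\geq 1} \frac{(-1)^{k+1}}{k} y^k x^{k s_i},
\]
and compute first $L \pmod{x^t, y^m}$, then $f = \exp(L) \pmod{x^t, y^m}$.

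For the first phase, I would bucket the weights: let $c_s = \#\{i : s_i = s\}$, which can be tabulated in $\mathrm{O}(n)$ time after discarding any $s_i \geq t$. Then
\[
L \equiv \sum_{s=1}^{t-1} c_s \sum_{k=1}^{\min(m-1,\,\lfloor (t-1)/s \rfloor)} \frac{(-1)^{k+1}}{k}\, y^k x^{ks} \pmod{x^t, y^m}.
\]
The number of nonzero monomials appearing is $\sum_{s=1}^{t-1} \min\!\bigl(m-1, \lfloor (t-1)/s\rfloor\bigr)$, which is $\mathrm{O}(t\log m)$ by splitting the sum at $s \approx t/m$ (small $s$ contribute at most $m$ terms each over $\mathrm{O}(t/m)$ values of $s$; large $s$ contribute a harmonic-type tail bounded by $t\log m$). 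Hence $L$ is assembled in $\mathrm{O}(n + t\log m)$ time.

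For the second phase, I would view the coefficient ring as $R = \mathbb{R}[x]/(x^t)$ and compute $\exp(L) \in R[y]/(y^m)$ by Newton iteration in $y$, doubling the $y$-precision at each step via $f^{(k+1)} = f^{(k)}\bigl(1 - \log f^{(k)} + L\bigr) \pmod{y^{2^{k+1}}}$, with $\log$ obtained from $\int (h'/h)$. Each Newton step uses a constant number of multiplications, inversions, and integrations in $R[y]/(y^{2^{k+1}})$. The key subroutine is multiplication of two bivariate polynomials of bidegree $< t$ in $x$ and $< 2^{k+1}$ in $y$: via Kronecker substitution $x^i y^j \mapsto z^{i + t j}$ this reduces to a univariate multiplication of degree $< 2^{k+1} t$, which by Lemma~\ref{lem:product} costs $\mathrm{O}(2^{k+1} t \log(2^{k+1} t))$. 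Summing across Newton steps $k=0,1,\dots,\lceil\log_2 m\rceil$ telescopes to $\mathrm{O}(mt\log(mt))$.

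The main obstacle I anticipate is making the Newton-iteration-over-a-coefficient-ring rigorous: one has to verify that all the usual identities for $\exp$ and $\log$ of formal power series carry over when the base ring is $R = \mathbb{R}[x]/(x^t)$ rather than a field, and that the division-by-integer factors arising from $\exp/\log$ series are harmless over $\mathbb{R}$. Once this is checked, combining the two phases yields total time $\mathrm{O}(n + t\log m + mt\log(mt)) = \mathrm{O}(n + mt\log(mt))$, as required.
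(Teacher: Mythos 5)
Your proposal follows the same overall strategy as the paper---write $f=\exp(\log f)$, assemble $\log f=\sum_i\log(1+yx^{s_i})$ from its sparse monomial expansion in $\rmO(n+t\log t)$ time (your sharper $\rmO(n+t\log m)$ bucketing bound is a nice touch but immaterial to the final complexity), and then exponentiate by Newton iteration with bivariate multiplications done via Kronecker substitution and FFT. The one genuine difference is the axis of iteration: the paper works in $R[[x]]$ with coefficient ring $R=\mathbb{R}[[y]]/(y^{\tilde n})$ and doubles the $x$-precision (Lemma~\ref{lem:twovarFPS}), whereas you work over $\mathbb{R}[x]/(x^t)$ and double the $y$-precision. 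Your transposition is legitimate and even buys a small simplification: since $f\equiv 1\pmod{y}$ and hence $L\equiv 0\pmod{y}$, the Newton base case is just $f^{(0)}=1$, and the constants of integration vanish ($[y^0]\log f^{(k)}=\log 1=0$, preserved by the update since $f^{(k+1)}\equiv f^{(k)}\equiv 1\pmod y$); by contrast the paper must separately compute the $0$-th coefficients $\exp([x^0]f)$ and $\log([x^0]f)$ as elements of $\mathbb{R}[[y]]/(y^m)$ via Brent's lemma (Lemma~\ref{fps-exp}). The ``obstacle'' you flag---validity of the $\exp$/$\log$ identities over a truncated coefficient ring rather than a field---is exactly what the paper's Lemma~\ref{lem:twovarFPS} verifies for the transposed ring; the same verification goes through for $\mathbb{R}[x]/(x^t)$ because it is a $\mathbb{Q}$-algebra (so the divisions by integers are harmless) and the relevant elements have unit, respectively nilpotent, constant terms.

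One concrete slip: your Kronecker substitution $x^iy^j\mapsto z^{i+tj}$ is too tight. When multiplying two polynomials of $x$-degree $<t$, product terms have $x$-degree up to $2t-2$, so with stride $t$ a term $x^{i_1+i_2}y^{j_1+j_2}$ with $i_1+i_2\geq t$ aliases onto the retained coefficient $[y^{j_1+j_2+1}x^{i_1+i_2-t}]$, corrupting the answer even after reduction modulo $x^t$. You need stride at least $2t-1$, as in the paper's Lemma~\ref{lem:TwoVarProduct} (which uses $y=x^{2d_x+1}$); this changes nothing asymptotically, and with that fix your telescoping bound $\rmO(n+mt\log(mt))$ stands.
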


Moreover, we have the following lemma for Step~3.

\begin{lemma}\label{mainlemmass}
    Suppose that we are given the first $q$ coefficients of $g$ modulo $y^{\tilde{n}}$, denoted by $a_0, a_1, \dots, a_{q-1}$.
    %denoted by $[x^0]g, [x^1]g,\dots, [x^{q-1}]g$. 
    For a player $p$ with weight $w_p$, the Shapley--Shubik index $\SSS_p$ can be computed in $\rmO\left(\frac{\tilde{n} q}{w_p}\right)$ time.
\end{lemma}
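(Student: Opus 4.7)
The plan is to mimic the structure of the proof of Lemma~\ref{mainlemmabz}, adapted to the bivariate setting in which coefficients live in the ring $R = \mathbb{R}[[y]]/(y^{\tilde{n}})$. The key identity I would exploit is the geometric-series expansion
\[
\frac{1}{1+yx^{w_p}} \;=\; \sum_{j=0}^{\infty}(-1)^j y^j x^{jw_p},
\]
which is valid in $R[[x]]$ because the constant-in-$x$ term equals $1$, a unit of $R$. Crucially, reduction modulo $y^{\tilde{n}}$ kills every term with $j\ge\tilde{n}$, so the inverse is effectively a polynomial in $R[[x]]$ supported on at most $\tilde{n}$ monomials in $x$.

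Multiplying by $g$, I obtain for any $k\le q-1$ the formula
\[
[x^k]\,\frac{g}{1+yx^{w_p}} \;=\; \sum_{j=0}^{\min(\lfloor k/w_p\rfloor,\,\tilde{n}-1)} (-1)^j\,y^j\,a_{k-jw_p},
\]
where each $a_{k-jw_p}\in R$ is among the given inputs. There are $\rmO(\min(q/w_p,\tilde{n}))$ summands, and multiplying an element of $R$ by $y^j$ is a pure shift-and-truncate taking $\rmO(\tilde{n})$ ring-element additions per summand. I would apply this routine to the two values $k=q-1$ and $k=q-w_p-1$ that appear in~\eqref{SSg2}; each call therefore costs $\rmO(\tilde{n}\cdot\min(q/w_p,\tilde{n})) = \rmO(\tilde{n}q/w_p)$.

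Once these two elements of $R$ are in hand, every coefficient $[y^k x^{q-1}]\tfrac{g}{1+yx^{w_p}}$ and $[y^k x^{q-w_p-1}]\tfrac{g}{1+yx^{w_p}}$ for $1\le k\le \tilde{n}-1$ is read off directly. Assuming the factorials $0!,1!,\dots,n!$ were tabulated once in $\rmO(n)$ time during preprocessing (amortized over all players, not per-player), the weighted sum in~\eqref{SSg2} is evaluated in $\rmO(\tilde{n})$ further arithmetic operations, which is dominated by the previous step, yielding the claimed $\rmO(\tilde{n}q/w_p)$ bound.

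There is no substantive obstacle here; the main subtlety is simply recognizing the two facts that make everything go through: the inverse of $1+yx^{w_p}$ has at most $\tilde{n}$ nonzero terms in $R[[x]]$ because higher powers of $y$ vanish modulo $y^{\tilde{n}}$, and multiplying a truncated polynomial in $y$ by $y^j$ is an $\rmO(\tilde{n})$-time shift rather than an FFT-level convolution. With these observations, the two-variable argument proceeds in parallel with the Banzhaf case, paying exactly an extra factor of $\tilde{n}$ to track the $y$-coefficients.
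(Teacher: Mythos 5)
Your proposal is correct and takes essentially the same route as the paper's proof: both invert $1+yx^{w_p}$ via the geometric series $\sum_{j\ge 0}(-1)^j y^j x^{jw_p}$ in $R[[x]]$ with $R=\mathbb{R}[[y]]/(y^{\tilde{n}})$ and evaluate the two coefficients of $g/(1+yx^{w_p})$ needed in~\eqref{SSg2}, the only cosmetic difference being that you accumulate whole elements of $R$ by $\rmO(\tilde{n})$-time shift-and-add per summand while the paper extracts each $[y^\ell x^k]$ coefficient separately, which is the same double sum evaluated in a different order with the same $\rmO(\tilde{n}q/w_p)$ bound. Your extra remark that the factorials $0!,\dots,n!$ should be tabulated once in $\rmO(n)$ time amortized over all players is a valid refinement of a point the paper leaves implicit.
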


\begin{proof}
We use~\eqref{SSg2} to compute the Shapley--Shubik index $\SSS_p$.
It holds that
\[
\frac{g}{1+yx^{w_p}} = \left(\sum_{j=0}^{\infty}(-1)^j y^jx^{jw_p}\right)\cdot g.
\]
Then, for $k\leq q-1$, its $k$-th coefficient is
\[
[x^k]\frac{g}{1+yx^{w_p}}=\sum_{i+jw_p=k} (-1)^jy^j\cdot [x^i]g
=\sum_{j=0}^{\left\lfloor\frac{k}{w_p}\right\rfloor}(-1)^jy^j\cdot [x^{k - jw_p}]g.
\]
Hence, for $\ell=1,2,\dots, \tilde{n}$, we have 
\[
[y^\ell x^k] \frac{g}{1+yx^{w_p}}
=\sum_{j=0}^{\left\lfloor\frac{k}{w_p}\right\rfloor}(-1)^j\cdot [y^{\ell-j}x^{k - jw_p}]g.
\]
    Since the right-hand side has at most $\frac{k}{w_p}+1$ terms, we can compute it in $\rmO\left(\frac{q}{w_p}\right)$ time, provided $a_0, a_1, \dots, a_{q-1}$.
    Therefore, since the right-hand side of~\eqref{SSg2} can be calculated from $2(\tilde{n}-1)$ coefficients of $\frac{g}{1+yx^{w_p}}$, $\SSS_p$ can be obtained in $\rmO\left(\frac{\tilde{n}q}{w_p}\right)$ time.
\end{proof}

We now prove Theorem~\ref{mainss}.
\begin{proof}[Proof of Theorem~\ref{mainss}]
    As discussed above, Step~1 in the proposed algorithm takes $\rmO(n+\tilde{n}q\log (\tilde{n}q))$ time by Theorem~\ref{2_variable_exp-log}.
    Since $\tilde{n}=\min\{n, q\}\leq q$, the time complexity for Step~1 is $\rmO (n+\tilde{n}q\log (q))$.
    %if $n\leq q$, we have $\rmO(\tilde{n}q\log (\tilde{n}q)) = \rmO(nq\log (q))$, since $n=\tilde{n}\leq q$, and, if $n> q$, we obtain $\rmO(\tilde{n}q\log (\tilde{n}q)) =\rmO(n q\log (q))$, since $\tilde{n}=q$.
    %Thus, in each case, the time complexity for Step~1 is $\rmO (nq\log (q))$.
    Moreover, Step~2 takes $\rmO(\tilde{n}q)$ time, similarly to Section~\ref{sec:Banzhaf}. 
    
    It follows from Lemma~\ref{mainlemmabz} that the Shapley--Shubik index $\SSS_p$ of player $p$ can be computed in $\rmO\left(\frac{\tilde{n}q}{w_p}\right)$ time.
    Since two players $p, p'$ with $w_p = w_{p'}$ satisfy $\SSS_p = \SSS_{p'}$, we can compute the Shapley--Shubik index of all players by computing those of players with different weights. 
    The complexity for Step~3 can be bounded by   
    \begin{equation*}
        \rmO\left(\sum_{w=1}^{q-1} \frac{\tilde{n}q}{w}\right) = \rmO(\tilde{n}q\log (q)).
    \end{equation*}
    Therefore, the total complexity is $\rmO(n+\tilde{n} q\log (q))$, which completes the proof of the theorem.
    \end{proof}

\subsection{Two-variable Formal Power Series}
\label{sec:TwoVarFPS}

  This section is devoted to proving Theorem~\ref{2_variable_exp-log}.
  The proof extends the one of Theorem~\ref{subsetsum} by~\cite{JinW19} for the real field $\mathbb{R}$ to the case when $R=\mathbb{R}[[y]]/(y^{m})$, where $m$ is a non-negative integer.
%  Note that $\mathbb{R}[[y]]/(y^{n+1})$ is isomorphic to the ring $\{\sum_{i=0}^{n}a_i y^i \mid a_i\in \mathbb{R}\}$, whose element is a polynomial in $y\bmod (y^{n+1})$.
%  To this end, we introduce some notion over $R$, which is not necessarily $\mathbb{R}$.
%  Note that we are mainly interested in $R=\mathbb{R}[[y]]$.

  We first show that the product of two-variable polynomials over $\mathbb{R}$ can be computed by Lemma~\ref{lem:product}.
  %See Appendix~\ref{sec:Appendix} for the proof.

\begin{lemma}\label{lem:TwoVarProduct}
    Suppose that we are given two polynomials $f = \sum_{i=0}^{d_x}\sum_{j = 0}^{d_y}a_{ij}x^{i}y^{j}$ and $g = \sum_{i=0}^{d_x}\sum_{j = 0}^{d_y}b_{ij}x^{i}y^{j}$. %in $R[[x]]$ where $R=\mathbb{R}[[y]]$.
%    \begin{align*}
%        f = \sum_{i=0}^{d_x}\sum_{j = 0}^{d_y}a_{ij}x^{i}y^{j} \quad \text{and}\quad
%        g = \sum_{i=0}^{d_x}\sum_{j = 0}^{d_y}b_{ij}x^{i}y^{j}.
%    \end{align*}
    Then the product $f\cdot g$ can be computed in $\rmO(d_xd_y\log(d_xd_y))$ time.
    That is, we can compute, in $\rmO(d_xd_y\log(d_xd_y))$ time, all the coefficients
    \begin{equation*}
        [y^hx^k] (f\cdot g) = \sum_{i_1+i_2=k}\sum_{j_1+j_2=h}f_{i_1j_1}\cdot g_{i_2j_2}
    \end{equation*}
    for two non-negative integers $k, h$ satisfying $k\leq 2d_x$ and $h\leq 2d_y$.
\end{lemma}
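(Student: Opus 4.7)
The plan is to reduce two-variable polynomial multiplication to single-variable polynomial multiplication via Kronecker substitution, and then invoke Lemma~\ref{lem:product}. Concretely, I set $D = 2d_x + 1$ and define one-variable polynomials $\tilde{f}(x) = f(x, x^{D})$ and $\tilde{g}(x) = g(x, x^{D})$ by substituting $y \mapsto x^{D}$. Because $D > 2d_x$, every monomial $x^k y^h$ with $0 \le k \le 2d_x$ and $0 \le h \le 2d_y$ in the product $f \cdot g$ corresponds to the single exponent $k + hD$, and these exponents are pairwise distinct over the admissible range of $(k,h)$. Therefore the two-dimensional coefficient array of $f\cdot g$ can be read off directly from the one-dimensional coefficient array of $\tilde{f}\cdot \tilde{g}$ via
\[
[y^h x^k](f\cdot g) \;=\; [x^{k + hD}](\tilde{f}\cdot \tilde{g}).
\]

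Next I would verify the degree bound needed to apply Lemma~\ref{lem:product}. Both $\tilde{f}$ and $\tilde{g}$ have degree at most $d_x + d_y D = d_x + d_y(2d_x + 1) = O(d_x d_y)$, so Lemma~\ref{lem:product} computes the product $\tilde{f} \cdot \tilde{g}$, including all of its relevant coefficients, in $O(d_x d_y \log(d_x d_y))$ time. Building $\tilde{f}$ and $\tilde{g}$ from $f$ and $g$ (that is, placing the coefficients $a_{ij}$ and $b_{ij}$ into the corresponding slots $i + jD$) takes $O(d_x d_y)$ time, and similarly reading out the coefficients of $f \cdot g$ from $\tilde{f}\cdot \tilde{g}$ takes $O(d_x d_y)$ time. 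The total running time is therefore $O(d_x d_y \log(d_x d_y))$, as claimed.

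There is no real obstacle here; the only point requiring care is the non-collision of exponents after substitution, which is precisely why the choice $D = 2d_x + 1$ (rather than $D = d_x + 1$) is needed: the product has degree up to $2d_x$ in $x$, so $D$ must strictly exceed $2d_x$ to guarantee that distinct $(k, h)$ pairs map to distinct single-variable exponents. Once this is observed, the proof is a direct application of Lemma~\ref{lem:product}.
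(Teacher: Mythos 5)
Your proof is correct and matches the paper's argument exactly: both use the Kronecker substitution $y \mapsto x^{2d_x+1}$ to reduce to one-variable multiplication via Lemma~\ref{lem:product}, with the same degree bound and coefficient read-off $[y^hx^k](f\cdot g) = [x^{k+(2d_x+1)h}](\tilde{f}\cdot\tilde{g})$. Your explicit justification of the non-collision of exponents is a welcome addition that the paper leaves implicit.
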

\begin{proof}
%\begin{proof}[Proof of Lemma~\ref{lem:TwoVarProduct}]
    We transform $f$ and $g$ to one-variable polynomials by substituting $y=x^{2d_x+1}$.
    The resulting polynomials are denoted by $F$ and $G$, respectively, that is,
    \begin{align*}
        F = \sum_{i=0}^{d_x}\sum_{j= 0}^{d_y}a_{ij}x^{i+(2d_x+1)j} \text{\quad and\quad }
        G = \sum_{i=0}^{d_x}\sum_{j = 0}^{d_y}b_{ij}x^{i+(2d_x+1)j}.
    \end{align*}
    Then, since $F$ and $G$ are polynomials of degree $\rmO(d_xd_y)$, their product $F\cdot G$ can be computed in $\rmO(d_xd_y\log(d_xd_y))$ time by Lemma~\ref{lem:product}.
    Since it holds that 
    \begin{equation*}
        [x^{k+(2d_x+1)h}]F\cdot G = [y^hx^k]f\cdot g
    \end{equation*}
    for every integers $k, h$ with $k\leq 2d_x, h\leq 2d_y$,
    we can obtain the coefficients of the product $f\cdot g$.
%\qed
\end{proof}

%Assume that $R=\mathbb{Q}$ or $R=\mathbb{Q}[[y]]$.
%  Let $R=\mathbb{R}[[y]]/(y^n)$.
  For a formal power series $f\in R[[x]]$ such that $[x^0]f$ is nilpotent with respect to $R$,
  define
  \[
    \exp (f) = \sum_{i=0}^{\infty} \frac{f^i}{i!}.
  \]
  We recall that an element $a\in R$ is \textit{nilpotent} if there exists a positive integer $k$ such that $a^k=0$.
  Also, for a formal power series $f\in R[[x]]$ such that $[x^0](f-1)$ is nilpotent with respect to $R$, define 
  \[
    \log (f) = -\sum_{i=1}^{\infty} \frac{(1-f)^i}{i}.
  \]

%Assume that $R=\mathbb{Q}$ or 
%Let $R=\mathbb{R}$.
%  For a formal power series $f\in \bbR[[x]]$ such that $[x^0]f=0$,
%  define
%  \[
%    \exp f = \sum_{i=1}^{\infty} \frac{f^i}{i!}.
%  \]
%  Also, for a formal power series $f\in \bbR[[x]]$ such that $[x^0]f=1$, define 
%  \[
%    \log (f) = -\sum_{i=0}^{\infty} \frac{(1-f)^i}{i}.
%  \]

When $R=\mathbb{R}$, it is known that $1/f$, $\exp (f)$, and $\log (f)$ can be computed efficiently with the aid of Lemma~\ref{lem:product} and the Newton method as below.
%The time complexity is the same as computing the product.

\begin{lemma}[Brent \cite{Brent}]\label{fps-exp}
Let $R=\mathbb{R}$, and let $f$ be a formal power series over $R[[x]]$ such that the first $t$ coefficients $a_0, a_1,\dots, a_{t-1}$, where $a_i =[x^i]f$, are given.
%let $f=\sum_{i=0}^{\infty} a_i x^i$ be a formal power series over $R[[x]]$ such that $a_0, a_1,\dots, a_{m-1}$ are given.
Then the following statements hold.
\begin{enumerate}
\item  If $f$ is a unit, 
the first $t$ coefficients of $1/f$ can be computed in $\rmO(t\log (t))$ time.    
\item   
If $a_0=1$, then the first $t$ coefficients of $\log (f)$ can be computed in $\rmO(t\log (t))$ time.
\item  If $a_0=0$, then the first $t$ coefficients of $\exp(f)$ can be computed in $\rmO(t\log (t))$ time.
\end{enumerate}
\end{lemma}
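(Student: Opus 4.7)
The plan is to establish the three claims in the stated order, exploiting the fact that each reduces to $O(\log t)$ polynomial multiplications of geometrically growing truncations via a Newton iteration. In every case the cost satisfies a recurrence of the form $T(t) = T(t/2) + O(t \log t)$, which resolves to $T(t) = O(t \log t)$ by Lemma~\ref{lem:product} and the usual geometric-series argument.

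For part~(1), I would use the Newton iteration $g_{k+1} := g_k(2 - f g_k) \pmod{x^{2^{k+1}}}$, starting from $g_0 := 1/a_0$. A direct computation shows that if $f g_k \equiv 1 \pmod{x^{2^k}}$, then writing $f g_k = 1 + x^{2^k} h$ gives $f g_k (2 - f g_k) = 1 - x^{2^{k+1}} h^2 \equiv 1 \pmod{x^{2^{k+1}}}$, so the precision doubles at each step. After $\lceil \log_2 t \rceil$ iterations the truncation to $x^t$ yields $1/f$, and each iteration contributes one truncated product at its current precision. For part~(2), I would invoke the identity $\log f = \int (f'/f)\, dx$: the formal derivative $f'$ modulo $x^{t-1}$ is immediate, the inverse $1/f$ modulo $x^{t-1}$ comes from part~(1), their product is one FFT multiplication, and termwise integration is linear.

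For part~(3), I would solve $\log g - f = 0$ for $g$ by Newton's method, yielding $g_{k+1} := g_k(1 + f - \log g_k) \pmod{x^{2^{k+1}}}$ with $g_0 := 1$. Each iteration invokes $\log$ at precision $2^{k+1}$, costing $O(2^k \log 2^k)$ by part~(2), together with one further multiplication, so the total is again $O(t \log t)$. The main obstacle is verifying quadratic convergence for this iteration: assuming $\log g_k \equiv f \pmod{x^{2^k}}$ and writing $\log g_k = f + x^{2^k} r$, one has $g_{k+1} \equiv g_k(1 - x^{2^k} r) \pmod{x^{2^{k+1}}}$, and expanding $\log g_{k+1} = \log g_k + \log(1 - x^{2^k} r)$ via the Mercator series shows that the linear contributions in $x^{2^k} r$ cancel, leaving $\log g_{k+1} \equiv f \pmod{x^{2^{k+1}}}$. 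A small amount of additional bookkeeping is required to confirm that $\log$ and $\exp$ remain well-defined on the relevant formal series throughout the iteration (guaranteed by the hypotheses $a_0 = 1$ in part~(2) and $a_0 = 0$ in part~(3)), but the convergence analysis for the exponential iteration is the substantive step.
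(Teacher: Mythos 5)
Your proposal is correct and follows essentially the same route as the paper: the paper cites Brent for Lemma~\ref{fps-exp} itself, but its proof of the generalization to $R=\mathbb{R}[[y]]/(y^m)$ (Lemma~\ref{lem:twovarFPS}) uses exactly your three ingredients — the Newton iteration $g_{2k}\equiv 2g_k - fg_k^2 \pmod{x^{2k}}$ for the inverse (algebraically identical to your $g_k(2-fg_k)$), the identity $\log f = \int f'/f$ for the logarithm, and Newton's method on $\log h - f = 0$ giving $g_{2k}\equiv g_k(1-\log g_k + f)$ for the exponential, with the same recurrence $T(t)=T(t/2)+\rmO(t\log t)$. Your quadratic-convergence verifications (the factorization $1-x^{2^{k+1}}h^2$ and the cancellation of the linear term in the Mercator expansion) are sound and match the paper's analysis.
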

   
Using the above lemma, together with the fact that $f=\exp(\log (f))$, Theorem~\ref{2_variable_exp-log} follows.
See~\cite{JinW19} for the details.
%and Appendix~\ref{sec:Appendix}.

In this section, we show analogous results to Lemma~\ref{fps-exp} when $R=\mathbb{R}[[y]]/(y^{m})$.
The proof idea is similar to that of Lemma~\ref{fps-exp}, but the difference is in computing the $0$-th coefficients of $\log(f)$ and $\exp(f)$, while it is trivial for the case when $R=\mathbb{R}$.   
%The proof of~(1) and~(3) is almost identical with that of Lemma~\ref{fps-exp}.
We here provide the proof for completeness.
%show only~(2) for computing the logarithm.
%The proof of~(1) and~(3) is provided in Appendix~\ref{sec:Appendix} for completeness.

\begin{lemma}\label{lem:twovarFPS}
Let $R=\mathbb{R}[[y]]/(y^{m})$ for a positive integer $m$, and let $f$ be a formal power series over $R[[x]]$ such that the first $t$ coefficients $a_0, a_1,\dots, a_{t-1}$, where $a_i =[x^i]f$ in $R$, are given.
%let $f=\sum_{i=0}^{\infty} a_i x^i$ be a formal power series over $R[[x]]$ such that $a_0, a_1,\dots, a_{m-1}$ in $R$ are given.
Then the following statements hold.
\begin{enumerate}
\item  If $f$ is a unit, 
the first $t$ coefficients of 
$1/f$ can be computed in $\rmO(mt\log (mt))$ time.    
\item  If $a_0-1$ is nilpotent, then the first $t$ coefficients of $\log (f)$ can be computed in $\rmO(mt\log (mt))$ time.
\item  If $a_0$ is nilpotent, then the first $t$ coefficients of $\exp(f)$ can be computed in $\rmO(mt\log (mt))$ time.
\end{enumerate}
\end{lemma}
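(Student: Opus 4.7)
The plan is to mimic the Newton-iteration approach used in the proof of Lemma~\ref{fps-exp} by Brent, reducing each computation over $R[[x]]$ to multiplications handled by Lemma~\ref{lem:TwoVarProduct} together with a single constant-in-$x$ ``seed'' computation inside $R$ that is handled by Lemma~\ref{fps-exp} applied to $y$-series over $\mathbb{R}$. Throughout, I would represent an element of $R[[x]]$ truncated to $t$ coefficients of $x$ as a two-variable polynomial of $x$-degree $<t$ and $y$-degree $<m$; by Lemma~\ref{lem:TwoVarProduct}, multiplying two such polynomials costs $\rmO(mt\log(mt))$ time.

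For part~1, I would first compute $b_0 := 1/a_0$ inside $R$. Since $R = \mathbb{R}[[y]]/(y^m)$ and $a_0$ is a unit in $R$ exactly when its $y$-constant term is nonzero, this is an inversion of a $y$-series truncated modulo $y^m$, which by Lemma~\ref{fps-exp}(1) costs $\rmO(m\log m)$. Then I would run the standard Newton iteration $g_{k+1} = g_k(2 - f g_k)$, which doubles the number of correct $x$-coefficients at each step. Starting from $g_0 = b_0 \in R[[x]]$, after $\lceil \log_2 t\rceil$ iterations one obtains the first $t$ coefficients of $1/f$. Iteration $k$ is dominated by two multiplications of $R$-polynomials of $x$-degree $<2^{k+1}$, costing $\rmO(m\cdot 2^{k+1}\log(m\cdot 2^{k+1}))$, and summing geometrically yields $\rmO(mt\log(mt))$.

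For part~2, I would use the identity $\log(f)' = f'/f$ together with the constant-in-$x$ term $\log(a_0)\in R$. First compute $\log(a_0)$ in $R$: the hypothesis that $a_0-1$ is nilpotent in $R$ is equivalent to the $y$-constant term of $a_0$ being $1$, so computing $\log(a_0)$ reduces to taking the logarithm of a $y$-series with constant term $1$, truncated modulo $y^m$, which by Lemma~\ref{fps-exp}(2) costs $\rmO(m\log m)$. Next I would compute $f'$ by coefficient-wise differentiation in $x$ in $\rmO(mt)$ time, invert $f$ via part~1, multiply to form $f'/f$, formally integrate in $x$, and add the scalar $\log(a_0)$. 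All steps fit within $\rmO(mt\log(mt))$. Part~3 is handled analogously via the Newton iteration $g_{k+1} = g_k\bigl(1 + f - \log(g_k)\bigr)$, which doubles $x$-precision and invokes part~2 at each step; the seed $g_0 = \exp(a_0)\in R$ is computed via Lemma~\ref{fps-exp}(3) applied to the $y$-series $a_0$ (nilpotent in $R$ means zero $y$-constant term) in $\rmO(m\log m)$ time, and the geometric sum of Newton-step costs is again $\rmO(mt\log(mt))$.

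The main obstacle is exactly the place where the proof genuinely departs from Brent's setting: computing the constant-in-$x$ seeds in $R$. Over $\mathbb{R}$, one has $\log 1 = 0$ and $\exp 0 = 0$ for free, but over $R = \mathbb{R}[[y]]/(y^m)$ these become nontrivial elements of $R$. The key enabling observation is that an element of $R$ is nilpotent if and only if its $y$-constant term vanishes, so the hypotheses ``$a_0-1$ nilpotent'' and ``$a_0$ nilpotent'' are precisely the conditions under which Lemma~\ref{fps-exp}(2) and (3), applied to the underlying $y$-series, produce the required seeds in $\rmO(m\log m)$ time, which is dominated by the subsequent Newton iterations.
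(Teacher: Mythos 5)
Your proposal matches the paper's proof essentially step for step: the same Newton iterations $g \mapsto g(2-fg)$ and $g \mapsto g\left(1+f-\log (g)\right)$ with precision doubling and geometric cost summation, the identity $\log (f) = \int f'/f$, the reduction of every product to Lemma~\ref{lem:TwoVarProduct} in $\rmO(mt\log(mt))$ time, and the same key observation the paper highlights, namely that the only genuine departure from Brent's setting is computing the $x$-constant seeds $1/a_0$, $\log(a_0)$, $\exp(a_0)$ inside $R=\mathbb{R}[[y]]/(y^m)$ via Lemma~\ref{fps-exp}, which is legitimate precisely because nilpotency in $R$ is equivalent to a vanishing $y$-constant term (the paper verifies the corresponding identities $[x^0]\log(f)=\log(a_0)$ and $[x^0]\exp(f)=\exp(a_0)$ explicitly, which you assert implicitly). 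One trivial slip in your prose: $\exp 0 = 1$, not $0$; this does not affect the argument.
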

%\begin{proof}[Proof of Lemma~\ref{lem:twovarFPS}~(1)]
\begin{proof}
\textbf{(1)}
Let $g^\ast=1/f$.
Since $f\cdot g^\ast=1$, we see that $[x^0]g^\ast = 1/a_0$.
Since $a_0$ is a unit in $R=\mathbb{R}[[y]]/(y^m)$, 
the inverse $1/a_0$ can be computed in $\rmO (m \log (m))$ time by Lemma~\ref{fps-exp}.
Let $g_1=1/a_0$. 
Then $g_1 \equiv g^\ast \bmod x^1$.

Suppose that, for some integer $k$, we have $g_k$ in $R[[x]]$ such that $[x^i]g_k = [x^i]g^\ast$ for any $i< k$ and $[x^i]g_k = 0$ for any $i> k$.
This means that $g_k \equiv g^\ast \pmod{x^{k}}$.
We also observe that, since $[x^i](f\cdot g_k) =[x^i](f\cdot g^\ast)$ for any $i< k$, we have $f\cdot g_k \equiv 1 \pmod{x^{k}}$.

We will show that, using $g_k$, we can obtain $g_{2k}$ in $R[[x]]$ such that $g_{2k} \equiv g^\ast \pmod{x^{2k}}$.
Specifically, given $g_k$ in $R[[x]]$, we define $g_{2k}$ as
\begin{equation}\label{eq:NewtonInverse}
g_{2k}\equiv 2g_{k} -fg^2_{k} \pmod{x^{2k}}.
\end{equation}

We show that $g_{2k}$ defined above satisfies $g_{2k} \equiv g^\ast \pmod{x^{2k}}$.
Since $f\cdot g_k \equiv 1 \pmod{x^{k}}$, \eqref{eq:NewtonInverse} implies that $g_{2k}\equiv 2g_k - g_k = g_k \pmod{x^{k}}$.
Hence it holds that $(g_{2k}-g_{k})^2\equiv 0\pmod{x^{2k}}$, which is equivalent to
\[
%g^2_{2k}-2g_{2k}g_{k} +g^2_{k}\equiv 0
%\quad\Longleftrightarrow\quad
g^2_{k}\equiv -g^2_{2k}+2g_{2k}g_{k} \pmod{x^{2k}}.
\]
Using this to~\eqref{eq:NewtonInverse}, we obtain
\begin{align}\label{eq:NewtonInverse2}
g_{2k}&\equiv 2g_{k} +f\cdot (g_{2k}^2-2g_{2k} g_k) \pmod{x^{2k}}\notag\\
\Longleftrightarrow\quad
(g_{2k}-2g_{k}) &\equiv f\cdot g_{2k} \cdot(g_{2k}-2g_{k}) \pmod{x^{2k}}.
\end{align}
Recall that $g^\ast$ has the inverse, and hence $[x^0]g^\ast$ has the inverse in $R$.
Since $[x^0](g_{2k}-2g_{k})=- [x^0]g_k = - [x^0]g^\ast$, we see that $(g_{2k}-2g_{k})$ also has the inverse.
Therefore,~\eqref{eq:NewtonInverse2} implies that $f\cdot g_{2k}\equiv 1\pmod{x^{2k}}$.
This means that $g_{2k} \equiv g^\ast \pmod{x^{2k}}$.

Thus we can compute $g_{2k}$ from $g_k$ by~\eqref{eq:NewtonInverse}, which takes $\rmO(mk\log (mk))$ time by Lemma~\ref{lem:TwoVarProduct}.
We repeat this procedure until the integer $k$ is larger than or equal to $t$.
In the end, we can obtain a desired one $g_t$ in $R[[x]]$ such that $[x^i]g_t = [x^i]g^\ast$ for any $i< t$.

Let $T(t)$ be the time complexity to compute $g_t$.
Then we have
\[
T(t)=T(t/2) + \rmO(mt\log (mt)) = \rmO(mt\log (mt)).
\]
Thus the lemma~(1) holds.

\noindent
\textbf{(2)}
It is known that $\log (f) = \int f'/f$, where $f'$ is the differential of $f$ and  $\int$ is an integral~(See e.g.,~\cite{ardila2015algebraic} for the details).
By definition, it holds that $f'=\sum_{i=1}^{\infty}i a_i x^{i-1}$.
Hence we know the first $t-1$ coefficients of $f'$, $[x^0]f', [x]f', \dots, [x^{t-2}]f'$, from $a_1,\dots, a_{t-1}$.
Also, the first $t$ coefficients of $1/f$ can be computed in $\rmO(mt\log (mt))$ time by the first statement of this lemma.

Let $f'\cdot (1/f)=\sum_{i=0}^\infty b_i x^i$.
Then it follows from Lemma~\ref{lem:TwoVarProduct} that the first $t-1$ coefficients $b_0,\dots, b_{t-2}$ can be computed in $\rmO(mt\log (mt))$ time.
We see that 
\[
\int \frac{f'}{f} = \sum_{i=1}^\infty \frac{b_{i-1}}{i} x^i+[x^0]\log (f).
\]
%the integral of $f'/f$ is equal to $\sum_{i=1}^\infty \frac{b_{i-1}}{i} x^i+[x^0]\log (f)$.
Therefore, since $\log (f) = \int f'/f$, we can obtain $[x^1]\log (f), \dots, [x^{t-1}]\log (f)$ from $b_0,\dots, b_{t-2}$.
%except for the constant $[x^0]\log (f)$.
The time complexity of this step is $O(mt \log (mt))$.

It remains to compute $[x^0]\log (f)$.
We observe that $[x^0]\log (f) = \log ([x^0]f)$.
In fact, it holds by definition that
\[
[x^0]\log(f)
=[x^0]\left(-\sum_{i=1}^\infty \frac{(1-f)^i}{i}\right)
=-\sum_{i=1}^\infty [x^0]\frac{(1-f)^i}{i}
=
-\sum_{i=1}^\infty \frac{(1-a_0)^i}{i},
%= \log (a_0).
\]
where the last equation follows from $[x^0](1-f)^i=(1-a_0)^i$.
%, the right-hand side is
%\[
%-\sum_{i=1}^\infty (1-a_0)^i/i) = \log (a_0).
%\]
%Now we see that $[x^0]\log (f) = \log ([x^0]f) = -\sum_{i=1}^\infty (1-a_0)^i/i)$.
The most right-hand side, which is $\log (a_0)$, is an element of $\mathbb{R}[[y]]/(y^m)$, and hence it can be computed in $\rmO(m \log (m))$ time by Lemma~\ref{fps-exp}.
Thus, $[x^0]\log (f)$ can be obtained in $\rmO(m \log (m))$ time.

Therefore, all the first $t$ coefficients of $\log (f)$ can be calculated in $\rmO(mt \log (mt))$ time in total, which completes the proof of the lemma~(2).

\noindent
\textbf{(3)}
Let $g^\ast =\exp (f)$.
Then $\log (g^\ast) = f$.
Also, define $G(h) = \log (h)- f$.

We first observe that $[x^0]\exp(f) = \exp([x^0]f)$, since we have by definition 
\[
[x^0]\exp(f)
=[x^0]\left(\sum_{i=0}^{\infty} \frac{f^i}{i!}\right)
=\sum_{i=0}^{\infty} \frac{[x^0]f^i}{i!}
=\sum_{i=0}^{\infty} \frac{a_0^i}{i!}
=\exp([x^0]f).
\]
 Since $[x^0]f$ is an element of $R=\mathbb{R}[[y]]/(y^m)$, $\exp([x^0]f)$ can be computed in $\rmO (m \log (m))$ time by Lemma~\ref{fps-exp} applied to $\mathbb{R}[[y]]$.
We set $g_1 = \exp([x^0]f)$.
%Since $[x^0]g^\ast = 1$, we set $g_0 = 1$\textbf{Is it true when $[x^0]g^\ast-1$ is nilpotent??}.

Assume that, for some integer $k$, we are given a formal power series $g_{k}$ in $R[[x]]$ that satisfies $g_k \equiv g^\ast \pmod{x^k}$.
In other words, $g_k$ satisfies that $G(g_k)\equiv 0\pmod{x^{k}}$.
%$[x^i]g_k = [x^i]g^\ast$ for any $i\leq k$.
%Below, we describe how to obtain $g_{2k}$ in $R[[x]]$ that satisfies $g_{2k} \equiv g^\ast \pmod{x^{2k}}$ using $g_K$.
Then we define $g_{2k}$ in $R[[x]]$ as 
\begin{equation}\label{eq:NewtonExp}
g_{2k} \equiv g_k - \frac{G(g_k)}{G'(g_k)} \pmod{x^{2k}},
\end{equation}
where $G'$ is the differential of $G$.

We shall show that $G(g_{2k})\equiv 0 \pmod{x^{2k}}$.
Since $G(g_k) \equiv 0 \pmod{x^{k}}$, it holds that $g_{2k}\equiv g_{k} \pmod{x^{k}}$ by~\eqref{eq:NewtonExp}.
Hence we have $(g_{2k}-g_k)^2\equiv 0 \pmod{x^{2k}}$.
Take the Taylor expansion of $G$ at $g_k$, that is, 
\begin{align*}
G(f_0) &= G(g_k) + G'(g_k) \left(f_0-g_k\right) + O\left((f_0-g_k)^2\right)
\end{align*}
for a formal power series $f_0$.
This implies that
\begin{align*}
G(g_{2k}) &= G(g_k) + G'(g_k) (g_{2k}-g_k) + O((g_{2k}-g_k)^2)\\
&\equiv G(g_k) + G'(g_k) (g_{2k}-g_k) \pmod{x^{2k}},
\end{align*}
since $(g_{2k}-g_k)^2\equiv 0 \pmod{x^{2k}}$.
Substituting~\eqref{eq:NewtonExp} for $g_{2k}$ in the equation, we have $G(g_{2k})\equiv 0 \pmod{x^{2k}}$.

Since $G'(g_k) = 1/g_k$, \eqref{eq:NewtonExp} is equal to
\[
g_{2k} \equiv g_k - g_k \left(\log (g_k) - f\right) = g_k \left(1-\log (g_k) + f\right) \pmod{x^{2k}}.
\]
%This is a desired $g_{2k}$.
Therefore, $g_{2k}$ can be computed from $g_k$ in $\rmO(mk\log (mk))$ time by Lemma~\ref{lem:TwoVarProduct}.

We repeat the above procedure until the integer $k$ is larger than or equal to $t$.
Then we can compute $g_t$ such that $G(g_t)\equiv 0 \pmod{x^{t}}$, which is a desired one, as $G(g_t)\equiv 0\pmod{x^{t}}$ means that $\log (g_t)\equiv f \pmod{x^{t}}$.
Let $T(t)$ be the time complexity to compute $g_t$.
Then we have
\[
T(t)=T(t/2) + \rmO(mt\log (mt)) = \rmO(mt\log (mt)).
\]
Thus the total complexity is $\rmO(mt\log (mt))$ time.
\end{proof}

We note that the above lemma does  not hold for a general commutative ring $R$, as it might happen that $[x^i]f$ is an infinite sum, and moreover, it is not clear how to compute the $0$-th coefficients.

We are now ready to prove Theorem~\ref{2_variable_exp-log}.

\begin{proof}[Proof of Theorem~\ref{2_variable_exp-log}]
Recall that $f = \prod_{i=1}^n (1+yx^{s_i})$ is a polynomial over the real field $\mathbb{R}$.
We consider $f$ as an element of the formal power series ring $R[[x]]$ where $R=\mathbb{R}[[y]]/(y^{m})$, and compute the first $t$ coefficients of $f\pmod{y^m}$ using the relationship $f=\exp (\log (f))$.
We may assume that $s_i\leq t$ for every $i$, as otherwise we may remove $(1+yx^{s_i})$ for $s_i>t$ from $f$. 

We first compute the first $t$ coefficients of $\log (f)$ as follows.
It holds that
    \begin{equation}\label{eq:thmTwoFPS}
        \log(f) = \log \prod_{i=1}^{n} (1+yx^{s_i}) = \sum_{i=1}^{n} \log (1+yx^{s_i}).
    \end{equation}
    Thus the first $t$ coefficients of $\log(f)$ can be computed from those of $\log (1+yx^{s_i})$ with different exponent $s_i$.
By the definition of the logarithm, we have
    \begin{equation*}
        \log(1+yx^{s_i}) = \sum_{j=1}^{\infty}(-1)^j\frac{(yx^{s_i})^j}{j}
        =\sum_{j=1}^{\infty}\frac{(-1)^jy^j}{j}x^{s_ij}. 
    \end{equation*}
Then $\log(1+yx^{s_i})$ has at most $\frac{t}{s_i}$ non-zero coefficients in the first $t$ coefficients, where each coefficient has only one monomial in $y$.
Hence the first $t$ coefficients of $\log(1+yx^{s_i})$ can be computed in $\rmO\left(\frac{t}{s_i}\right)$ time.
%Since each coefficient is a polynomial of degree at most $m-1$ in $y$, the first $t$ coefficients of $\log(1+yx^{s_i})$ can be obtained by computing at most $m\frac{t}{s_i}$ non-zero values.
Therefore, since $s_i\leq t$ for every $i$, it follows from~\eqref{eq:thmTwoFPS} that the first $t$ coefficients of $\log(f)$ can be computed in time
\[
\rmO\left(\sum_{s=1}^t \frac{t}{s}\right) = \rmO( t \log (t)).
\]

Finally, since $f=\exp(\log (f))$, we can compute the first $t$ coefficients of $f$ in $\rmO(mt\log (mt))$ time by Lemma~\ref{lem:twovarFPS}.
Thus the theorem holds, as reading the input takes $\rmO(n)$ time.
\end{proof}

\section{Conclusion}\label{sec:conclusion}

In this paper, we proposed fast pseudo-polynomial-time algorithms for computing power indices in weighted majority games.
Our algorithms calculate the Banzhaf index and the Shapley--Shubik index for all players in $\rmO(n+q\log (q))$ time and $\rmO(nq\log (q))$ time, respectively, where $n$ is the number of players and $q$ is a given quota. 
Our algorithms are faster than existing algorithms when $q=2^{o(n)}$.

We remark that our algorithms ignore the time complexity to deal with high-precision integers, following the literature such as~\cite{Matsui,Uno12}.
As the numerators of the Banzhaf and Shapley--Shubik indices may be $2^{\mathrm{\Omega}(n)}$, the time complexity in the standard computation model would increase by a factor of $n$.
On the other hand, we can avoid arithmetic operations of large integers by applying our algorithm $O(n)$ times as follows.
Let $\alpha_1,\dots, \alpha_\ell$ be prime numbers such that $q<\alpha_i\leq n$ for each $i$ and $2^n < \prod_i \alpha_i$, where $\ell \leq n$.
Then, for each $i$, we perform our algorithm over a prime field $\mathbb{F}_{\alpha_i}$ to obtain the output modulo $\alpha_i$.
It follows from the Chinese Remainder Theorem that we can uniquely recover the exact value of the output from these remainders.

In weighted majority games,
%esides the Banzhaf and Shapley--Shubik indices, 
other indices are proposed to measure the voting power of each player, such as the Deegan–Packel index and the Public Good index.
See e.g.,~\cite{AlonsoMeijide2012,kurz2016computing,StaudacherKSFKN21} and references therein.
It is known that most of the existing approaches for computing the Banzhaf and Shapley--Shubik indices can be adapted to these indices.
It would be interesting if our approach can be used to calculate these power indices.

\subsubsection{\ackname} 
This work was supported by
JSPS KAKENHI Grant Numbers 22H05001, 23K21646, and
JST ERATO Grant Number JPMJER2301, Japan.

\bibliographystyle{splncs04}
\bibliography{powerindex}

\clearpage
\appendix

\end{document}